\definecolor{crimsonglory}{rgb}{0,0,0}
 \newtheorem{theorem}{Theorem}[section]
 \newtheorem{lemma}[theorem]{Lemma}
 \newtheorem{remark}[theorem]{Remark}
\def\GrabProofArgument[#1]{ #1: \egroup\ignorespaces}
\def\proof{\noindent\textbf\bgroup Proof%
	\@ifnextchar[{\GrabProofArgument}{. \egroup\ignorespaces}}
\newenvironment{theorem-repeat}[1]{\begin{trivlist}
		\item[\hspace{\labelsep}{\bf\noindent Theorem \ref{#1} (repeated).}]\em }%
	{\end{trivlist}}
\newcommand{\tmax}{\mathsf{t}_{\mathsf{max}}}
\newcommand{\rmax}{\mathsf{r}_{\mathsf{max}}}
\newcommand{\opers}{\mathsf{oprs}}
\newcommand{\important}{\mathcal{I}}
\newcommand{\inst}{\xi}
\newcommand{\sums}{\mathcal{C}}
\newcommand{\milestones}{\lambda}
\newcommand{\Aa}{\mathsf{A}}
\newcommand{\Bb}{\mathsf{B}}
\newcommand{\Cc}{\mathsf{C}}
\newcommand{\Xx}{\mathsf{X}}
\newcommand{\Yy}{\mathsf{Y}}
\newcommand{\Zz}{\mathsf{Z}}
\newcommand{\Saeed}[1]{{\color{red}[Saeed: #1]}}
\newcommand{\Hamed}[1]{{\color{blue}[Hamed: #1]}}
\newcounter{proccnt}
\newcommand{\konote}[1]{}
\title{Dynamic Subset Sum with Truly Sublinear Processing Time}
\author[1,2]{Hamed Saleh}
\author[2]{Saeed Seddighin}
\affil[1]{University of Maryland, College Park}
\affil[2]{Toyota Technological Institute at Chicago}
\begin{document}
	\newcommand{\ignore}[1]{}
\renewcommand{\theenumi}{(\roman{enumi}).}
\renewcommand{\labelenumi}{\theenumi}
\sloppy
\date{}
\author{}

\maketitle

\thispagestyle{empty}
\begin{abstract}
Subset sum is a very old and fundamental problem in theoretical computer science. In this problem, $n$ items with weights $w_1, w_2, w_3, \ldots, w_n$ are given as input and the goal is to find out if there is a subset of them whose  weights sum up to a given value $t$. While the problem is NP-hard in general, when the values are non-negative integer, subset sum can be solved in pseudo-polynomial time $~\widetilde O(n+t)$. 

In this work, we consider the dynamic variant of subset sum. In this setting, an upper bound $\tmax$ is provided in advance to the algorithm and in each operation, either a new item is added to the problem or for a given integer value $t \leq \tmax$, the algorithm is required to output whether there is a subset of items whose sum of weights is equal to $t$. Unfortunately, none of the existing subset sum algorithms is able to process these operations in truly sublinear time\footnote{Truly sublinear means $n^{1-\Omega(1)}$.} in terms of $\tmax$.

Our main contribution is an algorithm whose amortized processing time\footnote{Since the runtimes are amortized, we do not use separate terms update time and query time for different operations and use processing time for all types of operations.} for each operation is truly sublinear in $\tmax$ when the number of operations is at least $\tmax^{2/3+\Omega(1)}$. We also show that when both element addition and element removal are allowed, there is no algorithm that can process each operation in time $\tmax^{1-\Omega(1)}$ on average unless \textsf{SETH}\footnote{The \textit{strong exponential time hypothesis} states that no algorithm can solve the satisfiability problem in time $2^{n(1-\Omega(1))}$.} fails. 
\end{abstract}

\section{Introduction}
Dynamic programming (DP\footnote{To avoid confusion with the dynamic setting, we refer to dynamic programming by DP throughout the paper.}) is one of the basic algorithmic paradigms for solving computational problems. 
The technique is of central use in various
fields such as bioinformatics, economics, and operations research.
While DP has been a core algorithmic technique in the
past, as computational models evolve, there is a pressing need to design alternative algorithms for DP problems.

A great example is when the algorithm is required to update the solution after each incremental change, namely the dynamic setting. Dynamic settings for many problems have been studied, e.g. \cite{DBLP:conf/stoc/HenzingerKNS15,DBLP:conf/stoc/NanongkaiS17,gawrychowski2018optimal,DBLP:conf/stoc/AssadiOSS18,DBLP:conf/soda/AssadiOSS19,chen2013dynamic,DBLP:conf/stoc/LackiOPSZ15,DBLP:journals/corr/abs-1909-03478,DBLP:conf/focs/NanongkaiSW17}. In general, in dynamic settings the goal is to develop an algorithm where the solution can be updated efficiently given incremental changes to the input. In the context of graph algorithms~\cite{DBLP:conf/stoc/NanongkaiS17,DBLP:conf/focs/NanongkaiSW17,DBLP:conf/stoc/LackiOPSZ15,DBLP:conf/stoc/AssadiOSS18,DBLP:conf/soda/AssadiOSS19,DBLP:journals/corr/abs-1909-03478}, such changes are usually modeled by edge addition or edge deletion. Also, for string problems, changes are typically modeled with character insertion and character deletion~\cite{charalampopoulos2020dynamic,saeednew,our-stoc-paper,gawrychowski2020fully}.

This work is concerned with dynamic algorithms of subset sum which is a text-book example of DP. In this problem, a list of $n$ items with weights $w_1, w_2, \ldots, w_n$ are given as input and the goal is to find out if there is a subset of items whose weights sum up to $t$. While classic DP algorithms solve subset sum in time $O(nt)$, more efficient algorithms have been recently developed for subset sum~\cite{koiliaris2019faster}, culminating in the $\tilde O(n+t)$ time algorithm of Bringmann~\cite{bringmann2017near}.

We consider the dynamic variant of subset sum wherein an upper bound $\tmax$ is given to the algorithm in advance and in each operation, either a new item is added to the problem or for a given value $t \leq \tmax$, the algorithm is required to output whether there is a subset of items whose sum of weights is equal to $t$. The classic DP solution for subset sum can be modified to process each operation in time $ O(\tmax)$. Unfortunately, even the more advanced algorithms of subset sum~\cite{koiliaris2019faster,bringmann2017near} do not offer a meaningful improvement upon this naive algorithm (the only way one can make use of them in the dynamic context is to recompute the solution from scratch after each modification). Therefore, a natural question that emerges is if one can design a dynamic algorithm for subset sum that processes each operation in truly sublinear time in terms of $\tmax$?

In this work, we answer this question in the affirmative. Our main contribution is an algorithm whose amortized processing time for each operation is truly sublinear in $\tmax$ when the number of operations is at least $\tmax^{2/3+\Omega(1)}$. We also show that when both element addition and element removal are allowed, there is no algorithm that can process each operation in truly sublinear time unless \textsf{SETH} fails. In addition to this, we present several algorithms for simplified variants of subset sum that outperform our general algorithm.
\setcounter{page}{1}

\definecolor{LightCyan}{rgb}{0.88,1,1}
\begin{table}[h]
    \renewcommand{\arraystretch}{1.2}
	\centering
	\begin{tabular}{|c|c|c|c|}
		\hline
		problem & amortized processing time & reference\\
		\hline
		dynamic subset sum & $\tilde O(1 + \tmax^{5/3} /\opers)$ & Theorem~\ref{theorem:main}\\
		\hline	
        dynamic subset sum with offline access & $\tilde O(1+\tmax \min{\{\sqrt{\opers},\sqrt{\tmax}\}}/\opers)$ & Theorem~\ref{thm:offline}\\
        \hline 
        fully dynamic bounded 3-sum & $\tilde O(\min{\{\opers, \rmax^{0.5}\}})$ & Theorem~\ref{thm:3sum}\\
        \hline 
        fully dynamic bounded $k$-sum & $\tilde O(\min{\{\opers^{k-2}, \rmax^{\frac{k-2}{k-1}}\}})$ & Theorem~\ref{thm:ksum}\\
        \hline
	\end{tabular}
\caption{The results of this paper are summarized in this table. $\opers$ denotes the number of operations in the dynamic problems.}
\end{table}

\subsection{Related Work}
Recent studies have investigated alternative dynamic algorithms for DP problems. For instance, a series of recent works~\cite{saeednew,our-stoc-paper,gawrychowski2020fully} (STOC'20,21,21) give dynamic algorithms for the \textit{longest increasing subsequence problem}  that update the solution in sublinear time after each modification to the input. Other famous DP problems such as \textit{edit distance} and \textit{the longest common subsequence problem} have also been studied in the dynamic setting and almost efficient algorithms are given for them~\cite{charalampopoulos2020dynamic}. 

The well-known pseudo-polynomial time algorithm of Bellman~\cite{bellman57a} solves the subset sum problem in time $O(nt)$. No significant improvements upon this classic algorithm was known until recent years. In 2017, Koiliaris and Xu~\cite{koiliaris2019faster} break the $O(nt)$ barrier after several decades by giving two algorithms with runtimes $O(n\sqrt{t})$ and $O(n + t^{4/3})$. Shortly after, Bringmann~\cite{bringmann2017near} presents an algorithm which solves the subset sum problem in time $\widetilde{O}(n+t)$. This algorithm is tight up to polylogarithmic factors.

When $t$ is exponentially large in terms of $n$, subset sum can be solved in time $O(2^{n/2} \textsf{poly}(n, \log t))$ using the meet-in-the-middle approach~\cite{horowitz1974computing}. On the flip side, Abboud, Bringmann, Hermelin, and Shabtay~\cite{abboud2019seth} show that for any constant $\epsilon > 0$, there exists no algorithm with runtime $t^{1-\epsilon}2^{o(n)}$ for subset sum unless \textsf{SETH} fails. Their result is based on a careful reduction from $k$-\textsf{SAT} to subset sum.

Another variant of the subset sum problem which has received attention in recent years is the \textit{modular subset sum} problem. This problem is similar to subset sum except that an additional parameter $m$ is provided and the goal is to find out if there exists a subset of items such that the total sum of their weights modulo $m$ is equal to $t$. This problem has been studied in a series of works~\cite{koiliaris2019faster,axiotis2019fast,axiotis2021fast,cardinal2021modular} and it is known to be solvable in time $\tilde O(n+m)$.

\subsection{Preliminaries}
In the subset sum problem, we are given a list of $n$ items with non-negative integer weights $w_1, w_2, \ldots, w_n$, and we are asked to find out whether there exists a subset $S \subseteq [n]$ such that $\sum_{i \in S}{w_i} = t$ for a given parameter $t$. We define the dynamic variant of the subset sum problem in the following way:

\begin{itemize}
	\item There is a multiset of items which is initially empty. Also, a value $\tmax$ is given to the algorithm before any of the operations arrives.
	\item At each point in time, one of the following operations arrives:
	\begin{enumerate}
		\item Either an item with a given weight $w$ is added to the problem. We call such operations \textit{insertion operations}.
		\item Or for a given value $0 \leq t \leq \tmax$, the algorithm is required to find out if there is a subset of items whose total sum of weights is equal to $t$. We call such operations \textit{query operations}.
	\end{enumerate}
\end{itemize}

We also consider another variant of dynamic subset sum which we call \textit{fully dynamic subset sum}. In the fully dynamic regime, in addition to insertion operations and query operations, our algorithm is required to process deletion operations in which an already existing item is deleted from the problem. 

Another setting that we consider for dynamic subset sum is when offline access is provided. We call this problem dynamic/fully dynamic subset sum with offline access. This setting is similar to the dynamic setting, except that all operations are given to the algorithm at once and only after that the algorithm is required to process the operations in their order. Indeed, this setting is easier than the dynamic setting since the algorithm can foresee future operations before processing them.

Our goal is to design an algorithm that can process the operations in truly sublinear amortized time in terms of $\tmax$ and indeed this is only possible if the number of operations is considerable. Therefore, we also denote the number of operations by $\opers$. It is important to note that $\opers$ is \textbf{not} known to the algorithm in advance (unless offline access is provided).

We define the bounded 3-sum problem as a simplified variant of subset sum in the following way: In the bounded 3-sum problem, three sets $\Aa$, $\Bb$, and $\Cc$ are given as input wherein each set contains integer numbers in range $[0, \rmax]$. The goal is to find out if there are three numbers $a \in \Aa$, $b \in \Bb$, and $c \in \Cc$ such that $a+b = c$. Similar to subset sum, we define the dynamic variant of bounded 3-sum in the following way:

\begin{itemize}
	\item Initially, sets $\Aa$, $\Bb$, and $\Cc$ are empty. Also, the value of $\rmax$ is given to the algorithm before any of the operations comes.
	\item At each point in time, one of the following operations arrives:
	\begin{enumerate}
		\item Either an integer number $w$ in range $[0, \rmax]$ will be added to $\Aa$, $\Bb$, or $\Cc$. We call such operations \textit{insertion operations}.
		\item Or the algorithm is asked if there are three numbers $a \in \Aa$, $b \in \Bb$, and $c \in \Cc$ such that $a+b = c$. We call such operations \textit{query operations}.
	\end{enumerate}
\end{itemize}

When deletion operations are also allowed, we call the problem fully dynamic bounded 3-sum. We also generalize bounded 3-sum to bounded $k$-sum wherein instead of three sets $\Aa$, $\Bb$, and $\Cc$, we have $k$ sets $\Aa_1$, $\Aa_2$, $\ldots$, $\Aa_k$ and the goal is to find out if there are numbers $a_1 \in \Aa_1, a_2 \in \Aa_2, \ldots, a_k \in \Aa_k$ such that $a_1 + a_2 + a_3 + \ldots, + a_{k-1} = a_k$. Similar to dynamic bounded 3-sum, in the dynamic version of bounded $k$-sum, the value of $\rmax$ is given to the algorithm in advance. Insertion operations can add any integer number in range $[0, \rmax]$ to any of the sets $\Aa_1, \Aa_2, \ldots, \Aa_k$. Also, in the case of fully dynamic bounded $k$-sum, each deletion operation will remove an already existing number from a set. For a query operation, the algorithm has to find out if there are numbers $a_1 \in \Aa_1, a_2 \in \Aa_2, \ldots, a_k \in \Aa_k$ such that $a_1 + a_2 + a_3 + \ldots, + a_{k-1} = a_k$.

Throughout the paper, when we say a runtime is truly sublinear, we mean the runtime is truly sublinear in terms of $\tmax$ for subset sum, or truly sublinear in terms of $\rmax$ in case of $3$-sum and $k$-sum, unless explicitly stated otherwise.
\section{Our Results and Techniques}
Our main contribution is an algorithm for dynamic subset sum whose amortized processing time is truly sublinear in terms of $\tmax$ when the number of operations is at least $\tmax^{2/3+\Omega(1)}$. 

\vspace{0.2cm}
{\noindent \textbf{Theorem}~\ref{theorem:main}, [restated informally]. \textit{There exists an algorithm for dynamic subset sum whose amortized processing time is bounded by $\tilde O(1 + \tmax^{5/3} /\opers)$. The algorithm gives a correct solution to each query operation with high probability.\\}}

\begin{figure}[ht]
	\centering
	\tikzset{every picture/.style={line width=0.75pt}} 

\begin{tikzpicture}[x=0.75pt,y=0.75pt,yscale=-1,xscale=1]
	
	\draw [color={rgb, 255:red, 245; green, 166; blue, 35 }  ,draw opacity=1 ] [dash pattern={on 0.84pt off 2.51pt}]  (287,249.97) -- (287,269.6) ;
\draw [color={rgb, 255:red, 245; green, 166; blue, 35 }  ,draw opacity=1 ] [dash pattern={on 0.84pt off 2.51pt}]  (206.74,130.54) -- (206.74,269.92) ;
\draw [color={rgb, 255:red, 245; green, 166; blue, 35 }  ,draw opacity=1 ] [dash pattern={on 0.84pt off 2.51pt}]  (230.33,170.84) -- (230.33,270.04) ;

\draw [color={rgb, 255:red, 245; green, 166; blue, 35 }  ,draw opacity=1 ] [dash pattern={on 0.84pt off 2.51pt}]  (150,170) -- (230.33,170) ;

	\draw    (100.6,269.64) -- (343.5,270.09) ;
	\draw [shift={(345.5,270.09)}, rotate = 180.11] [fill={rgb, 255:red, 0; green, 0; blue, 0 }  ][line width=0.08]  [draw opacity=0] (12,-3) -- (0,0) -- (12,3) -- cycle    ;
	\draw    (150,280.59) -- (150,88.97) -- (150,75.59) ;
	\draw [shift={(150,73.59)}, rotate = 450] [fill={rgb, 255:red, 0; green, 0; blue, 0 }  ][line width=0.08]  [draw opacity=0] (12,-3) -- (0,0) -- (12,3) -- cycle    ;
	\draw    (230.02,269.78) -- (230.02,273.49) ;
	\draw    (286.74,269.6) -- (286.74,273.32) ;
	\draw    (150.4,130.24) -- (146.13,130.24) ;
	\draw    (150.15,170) -- (145.88,170) ;
	\draw [color={rgb, 255:red, 245; green, 166; blue, 35 }  ,draw opacity=1 ][fill={rgb, 255:red, 255; green, 0; blue, 0 }  ,fill opacity=1 ]   (206,130) -- (288,249.97) ;
	\draw [color={rgb, 255:red, 245; green, 166; blue, 35 }  ,draw opacity=1 ]   (288,249.97) -- (310.25,249.97) ;
	\draw    (146.5,250.09) -- (150,250.09) ;
	\draw [color={rgb, 255:red, 208; green, 2; blue, 27 }  ,draw opacity=1 ]   (150.4,130) -- (206,130) ;

	\draw    (206.74,269.49) -- (206.74,273.21) ;
	\draw [color={rgb, 255:red, 74; green, 144; blue, 226 }  ,draw opacity=1 ][fill={rgb, 255:red, 255; green, 0; blue, 0 }  ,fill opacity=1 ] [dash pattern={on 4.5pt off 4.5pt}]  (150.4,130.24) -- (230.5,249.59) ;
	\draw [color={rgb, 255:red, 74; green, 144; blue, 226 }  ,draw opacity=1 ][fill={rgb, 255:red, 255; green, 0; blue, 0 }  ,fill opacity=1 ] [dash pattern={on 4.5pt off 4.5pt}]  (230.5,249.59) -- (288,249.97) ;
	
	\draw (313.5,283.27) node [anchor=north west][inner sep=0.75pt]    {$\opers$};
	\draw (225.9,282.18) node [anchor=north west][inner sep=0.75pt]  [font=\footnotesize]  {$\tmax$};
	\draw (278.29,277.77) node [anchor=north west][inner sep=0.75pt]  [font=\footnotesize]  {$\tmax^{5/3}$};
	\draw (84.14,160.34) node [anchor=north west][inner sep=0.75pt]  [font=\footnotesize]  {$\widetilde{O}\left(\tmax^{2/3}\right)$};
	\draw (88.79,120.34) node [anchor=north west][inner sep=0.75pt]  [font=\footnotesize]  {$\widetilde{O}( \tmax)$};
	\draw (60.3,81.14) node [anchor=north west][inner sep=0.75pt]  [font=\scriptsize]  {$\text{processing time}$};
	\draw (101.64,240.84) node [anchor=north west][inner sep=0.75pt]  [font=\footnotesize]  {$\widetilde{O}( 1)$};
	\draw (190.76,278.89) node [anchor=north west][inner sep=0.75pt]  [font=\footnotesize]  {$\tmax^{2/3}$};
\end{tikzpicture}
	\caption{The solid polyline shows the amortized processing time of our dynamic subset sum algorithm. The red part indicates the range in which the trivial $\tilde{O}(\tmax)$ algorithm is optimum, and the orange part indicates the range in which $\tilde{O}(1+\tmax^{5/3}/\opers)$ is optimum. The dashed blue polyline illustrates the lower bound that is implied from~\cite{abboud2019seth}.}\label{fig:online-time}
\end{figure}
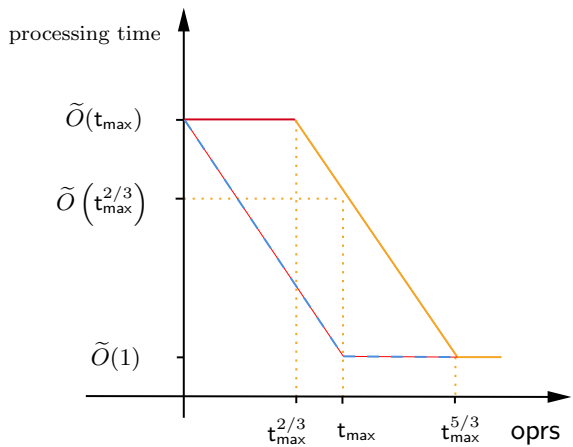

Let us clarify a few points here. First and foremost, our techniques are new and novel and to the best of our knowledge have not been used in previous work\footnote{Unfortunately, for the previous submission of this paper, a reviewer questioned the novelty of our techniques without brining any similar results.}. Second, in our setting,  multiple items can have equal weights. However, the problem is non-trivial even if we are guaranteed that all item weights are distinct and indeed our algorithm still obtains a sublinear processing time when the number of operations is at least $\tmax^{2/3+\Omega(1)}$\footnote{Unfortunately, for the previous submission of this paper, a reviewer saw this feature of our algorithm as a negative.}. As we explain below, our dynamic algorithm directly implies a new subquadratic time algorithm for subset sum in the standard setting (which is a fairly recent result) and therefore it cannot be implied from old works on subset sum\footnote{Unfortunately, for the previous submission of this paper, a reviewer was not convinced that our result cannot be implied from works in 1990s on subset sum. We hope this addresses the issue.}.

Theorem~\ref{theorem:main} does not make use of existing subquadratic time algorithms for subset sum and thus it can be seen as an independent improved algorithm for subset sum which is able to solve the problem iteratively. In other words, the result of Theorem~\ref{theorem:main} leads to an algorithm for subset sum which runs in time $\tilde O(n+t^{5/3})$. We remark that although the runtime is subquadratic in terms of $n+t$, this algorithm is still slower than the more advanced algorithms of subset sum~\cite{koiliaris2019faster, bringmann2017near}.

Before explaining the techniques, we would like to justify some of the limitations of our algorithm. Since our guarantees do not carry over to the fully dynamic setting wherein deletion operations are also allowed, it is natural to ask whether a fully dynamic algorithm for subset sum with truly sublinear processing time is possible at all?  We prove that such an algorithm does not exist unless \textsf{SETH} fails.

\vspace{0.2cm}
{\noindent \textbf{Theorem}~\ref{thm:hardness}, [restated informally]. \textit{Assuming \textsf{SETH}, there does not exist a fully dynamic subset sum algorithm that processes the operations in amortized truly sublinear time even if the number of operations grows to $\tmax^c$ for some constant $c > 0$.\\}}

It is interesting to note that Theorem~\ref{thm:hardness} holds even if offline access to the operations is provided in advance. Theorem~\ref{thm:hardness} is mainly inspired by the  \textsf{SETH}-based hardness of~\cite{abboud2019seth} for subset sum. In our proof, we build a collection of $N = \Theta(\tmax^c)$ instances of the subset sum problem $\inst_1, \inst_2, \ldots, \inst_N$ each with $O(\log \tmax)$ items and target $t_i$ which is bounded by $\tmax$. 
We denote by $\inst_i^+$ a set of operations that adds the items of $\inst_i$ to the dynamic problem, and by $\inst_i^-$ a set of operations that removes the items of $\inst_i$ from the dynamic problem. In addition, let $\inst_i^\textsf{?}$ be a query operation that asks if we can build a sum of $t_i$ using the elements of the dynamic problem at the time of the query operation. Then we consider the following sequence of operations for the dynamic subset sum problem which has a combination of insertion, deletion, and query operations:

\[ \inst_1^+, \inst_1^\textsf{?}, \inst_1^-, \inst_2^+, \inst_2^\textsf{?}, \inst_2^-, \ldots, \inst_N^+, \inst_N^\textsf{?}, \inst_N^- \] 

Roughly speaking, we show that under the  \textsf{SETH}  hypothesis, there is no algorithm that is able to find out whether the result of any of $\inst_i$'s is Yes in time $O(N \tmax^{1-\epsilon})$  for any constant $\epsilon > 0$. 
However if we feed this sequence into a fully dynamic subset sum algorithm with amortized processing time $O(\tmax^{1-\epsilon})$, it will solve this problem in time $\tilde O(N \tmax^{1-\epsilon})$ which contradicts \textsf{SETH}. We explain this reduction in details in Section~\ref{sec:hardness}.

Another limitation of our algorithm is that although its processing time is truly sublinear on average, for a few operations the runtime may grow to linear. This raises the question of whether it is possible to design a dynamic algorithm for subset sum whose worst-case runtime is truly sublinear? The answer to this question is also negative since~\cite{abboud2019seth} implies that solving subset sum for an instance with $n = O(\log \tmax)$ items and target $t= \tmax/2$ cannot be done in time $O(\tmax^{1-\epsilon})$ for any constant $\epsilon > 0$ assuming \textsf{SETH}. Let $\mathsf{I}$ be such an instance of subset sum. Consider a dynamic scenario for subset sum in which for an arbitrary number of operations, items with values more than $\tmax/2$ are added to the problem and at some point we decide to add the items of instance $\mathsf{I}$ to the dynamic problem. After we add all those items, we ask if there is a subset of items whose total weight is equal to $\tmax/2$. The hardness of~\cite{abboud2019seth} proves that the total time our algorithm spends on the last $n+1$ operations cannot be truly sublinear unless \textsf{SETH} fails. Moreover, since $n$ is polylogarithmic in terms of $\tmax$, the runtime of processing at least one of those operations would not be truly sublinear.

The above argument also shows that in order to obtain a dynamic algorithm whose amortized processing time is truly sublinear, there has to be a lower bound on the number of operations. Otherwise, the hardness of~\cite{abboud2019seth} can be used to make a dynamic subset sum scenario in which the number of operations is polylogarithmic and that there is no hope to obtain a truly sublinear processing time on average unless \textsf{SETH} fails. We acknowledge that this argument does not prove a lower bound of $\tmax^{2/3+\Omega(1)}$ on the number of operations which is required by our algorithm in order to make sure the amortized processing time is truly sublinear.

\subsection{Dynamic Subset Sum with Truly Sublinear Processing Time}
We are now ready to state the main ideas of our algorithm. We begin by considering a more relaxed version of the subset sum problem which is a great illustrating example of the ideas behind our algorithm. We call this problem \textit{static subset sum}. In this problem, an initial array $A: [0,\tmax] \rightarrow \{0,1\}$ is given which represents a solution to the subset sum problem. That is, $A_i = 1$ holds if and only if there is a subset of items whose total sum of weights is equal to $i$. The goal is to preprocess the sequence such that we can find out how the solution array changes if we add an element with a given weight  to the problem. More precisely, for a given $w$, we would like to find indices $w \leq i \leq \tmax$ such that $A_i = 0$ and $A_{i-w} = 1$. However, the algorithm does not update the sequence after an operation is processed. In other words, the initial sequence remains intact throughout the algorithm and we only need to detect which places of array $A$ would change if we added a new element. After an operation is processed, we do not modify the initial sequence.

We begin by stating a well-known application of \textit{Fast Fourier Transform} that is useful in this context. Given $0/1$ arrays $X$ and $Y$, one can construct in time $O((|X|+|Y|) \log (|X| + |Y|))$ an array $Z$ of size $|X|+|Y|+1$, such that $Z_i = \sum X_j Y_{i-j}$. This technique is usually referred to as \textit{polynomial multiplication}. Now, let $X$ be exactly the same as array $A$ and $Y$ be made as follows: $Y_i = 1-A_{|A|-i+1}$. (Both $X$ and $Y$ have the same length as $A$.) It follows that if we compute array $Z$ using polynomial multiplication, the number of indices of $A$ that would be affected by adding an item with any weight $w$ will be available in $Z$. More precisely, $Z_{|Y|+1 - w}$ would be equal to the number of indices $i$ such that $A_{i} = 0$ but $A_{i-w} = 1$. This technique would solve the problem if we were to just report the number of affected indices of $A$ after each operation instead of reporting their locations. In that case, the preprocessing time of the algorithm would be $O(|A| \log |A|)$ and the runtime for answering each operation would be $O(1)$. 

\subsection{From Counting to Detecting Exact Locations}
We show a nice reduction from counting the number of solution indices to actually reporting the positions of the indices that are affected by a change. To make the explanation simpler, we introduce a toy problem and show the reduction in terms of the toy problem. Let $B: [1,m] \rightarrow \{0,1\}$ be a 0/1 array of size $m$ which is hidden from us. We would like to discover which places of array $B$ have value $1$ by asking \textit{count queries}. For each count query, we give a subset $S$ of indices in $[1,m]$ to an oracle, and in return, we learn how many indices $i$ of $S$ satisfy $B_i = 1$. The goal is to find out which indices of $B$ have value $1$ with a small number of queries to the oracle. Intuitively, the toy problem shows how to use count queries in order to find the location of the elements with value $1$.

For reasons that become clear later, we are interested in non-adaptive algorithms. That is, all the queries are made to the oracle before learning about the answers. However, let us start for simplicity by explaining an adaptive algorithm. We denote the total number of ones in array $B$ by $k$. The easiest case is when $k = 1$ in which case we can solve the problem with a binary search with $O(\log m)$ count queries. Even for larger $k$, we can still modify the binary search idea to detect all of the desired positions in time $O(k \log m)$. It is not hard to show that a lower bound of $\tilde \Omega(k)$ holds on the number of necessary count queries when $k$ is not very close to $m$ (say $k < m/2$).

The next step is to make the count queries \textit{non-adaptive}. In this case, we first give a list of queries to the oracle, and after receiving all of the queries, the oracle reports the answers of the queries to us. Obviously, binary search cannot be used in the non-adaptive setting. Let us again begin with the case of $k=1$ in the non-adaptive setting. We make $\lceil \log m \rceil+1$ queries in the following way: For any $0 \leq \gamma \leq \lceil \log m \rceil$, we make a query that includes every element $i$ such that the $\gamma$'th bit of the base-2 representation of $i$ is equal to $1$. Since $k=1$, the answer to each query will be either $0$ or $1$. Let $s$ be the index of $B$ such that $B_s = 1$. For a $\gamma$, if the answer to the $\gamma$'th query is equal to $1$, it means that the $\gamma$'th bit of the base-2 representation of $s$ is equal to $1$. Otherwise, we know that the $\gamma$'th bit of the base-2 representation of $s$ is equal to $0$. Thus, by looking at the answers of the $\lceil \log m \rceil+1$ queries, we can detect what is the value of $s$. Notice that these queries are completely independent and thus they can be made non-adaptively.

While the above idea solves the problem for $k=1$, unfortunately it does not generalize to larger $k$. Therefore, we bring a sampling technique which in addition to the above ideas solves the problem for us. To show the technique, we consider the case of $k=2$. In this case, if we are able to divide the elements of $B$ into two subsets $B^1$ and $B^2$ such that each of $B^1$ and $B^2$ contains one element of $B$ with value $1$, we could solve the problem separately for each subset and report the locations of both desired indices. If we were to make adaptive queries, finding such $B^1$ and $B^2$ would be easy: We would keep making a random decomposition of indices of $B$ into $B^1$ and $B^2$ and ask a query with subset $S = B^1$ until the answer to the query is equal to $1$. Since the choices of $B^1$ and $B^2$ are random, we know that w.h.p, we find such a decomposition after $O(\log m)$ queries. Once we find such a decomposition, we solve the problem subject to sets $B^1$ and $B^2$ separately using the ideas discussed previously. This keeps the number of queries $O(\log m)$ for $k=2$.

Unfortunately, this is not possible in the non-adaptive setting. Therefore, we need to consider all of the $O(\log m)$ random choices of $B^1$ and $B^2$ in our algorithm. That is, we make $O(\log m)$ random decompositions of the elements of $B$ into two subsets and for each of them we make a query to see how many solution elements lie in each set. Next, for each random decomposition, we assume that the two solution indices are divided between them and solve the problem separately subject to the two subsets. When all of the queries are given to the oracle, we can find out which decomposition meets our desired condition and based on that we can detect which places of the array have value $1$. This does solve the problem but with more queries in comparison to the adaptive setting; Instead of $O(\log m)$ queries that we would make in the adaptive setting, we now make $O(\log^2 m)$ queries.

This idea is generalizable to larger $k$ but the additional multiplicative factor is large. That is, if we decompose array $B$ into $k$ subsets $B^1, B^2, \ldots,B^k$, the odds that each one of the subsets has exactly one element with value $1$ decreases exponentially in terms of $k$. Therefore, instead of having only $O(\log m)$ random decompositions for $k=2$, we would need to have $k^{O(k)} \log m$ decompositions to make the algorithm work for larger $k$. However, we can remedy this issue in the following way: We first set our goal to only report one of the solution indices (chosen arbitrarily). In this case, instead of a decomposition, we just need to find a subset $B^*$ of $B$ such that $B^*$  only contains one element of $B$ with value $1$. If we put each element of $B$ into $B^*$ with probability $1/k$, we would only need $O(\log m)$ random choices of $B^*$ to make sure one choice is desirable with high probability. We can then use the binary coding ideas to detect the position of the corresponding element with $O(\log m)$ additional queries. Again, since the queries are non-adaptive, we would need to make all of the queries for all $O(\log m)$ random choices of $B^*$ which amounts to $O(\log^2 m)$ queries in total.
	
The above idea gives one of the solution indices to us via $O(\log^2 m)$ queries. To discover the rest of the solution indices, we repeat the procedure all over again, this time only for the $k-1$ remaining solution elements (thus the probability that an element is included in $B^*$ is $1/(k-1)$). The only difference is that whenever a query  contains the already discovered solution index, we manually decrease the reported value by $1$ to ignore that element. The crux of the argument is that since the sets are made completely at random, the queries are completely independent and can be asked non-adaptively. That is, we make $O(\log^2 m)$ queries for discovering the first solution index. The next $O(\log^2 m)$ queries are completely independent of the first $O(\log^2 m)$ queries.  If we follow this pattern $k$ times, we would be able to discover all  indices of $B$ with value $1$ with $O(k \log^2 m)$ queries.

We can now  make a connection between the above toy problem and our solution for static subset sum. Recall that the solution is represented by an array $A:[0,\tmax] \rightarrow \{0,1\}$. After computing the polynomial multiplication, we can use the computed sequence to find out how many elements of $A$ will be affected if we add an item with any weight. Now, for $m = \tmax+1$, consider an array $B$ where $B_i$ is equal to $1$, if and only if $A_{i-1} = 0$ and $A_{i-w-1} = 1$ (the shift in the indices is due to the fact that array $A$ starts at index 0 but array $B$ starts at index 1).

Counting the number of indices that are affected by a change is equivalent to counting the number of elements in $B$ that are equal to $1$. However, we can extend this notion to any subset of elements in $B$. Recall that our solution for the count queries involves making two 0/1 sequences $X$ and $Y$ and taking their polynomial multiplication. Now, if we only want to take into account a subset $S$ of elements of $A$, it suffices to set $Y_{|A|-i+1} = 0$ if $i \notin S$. Thus, when considering the corresponding toy problem that concerns array $B$, we can implement the count queries by the polynomial multiplication technique. Therefore, a reduction from counting to detecting the exact location of the indices carries over to static subset sum. That is, for a fixed $k$, one can design an algorithm that preprocess the input in time $\tilde O(k|A|)$ and is able to detect the changes in $A$ after an element with weight $w$ is added to the problem so long as the number of changes is bounded by $k$. Keep in mind that if the number of changes exceeds $k$, our algorithm is not able to detect any of the affected indices at all. Otherwise, the runtime for detecting each affected index is $\tilde O(k)$. In our technical sections, we refer to the solution of the static subset sum problem as the \textsf{$k$-flip-detector} data structure.

\subsection{From Static Subset Sum to Dynamic Subset Sum}
In this section, we consider the dynamic subset sum problem. For simplicity, we assume that the number of operations is equal to $\tmax$ ($\opers = \tmax$) but we show in Section~\ref{sec:dynamic} that the average processing time of our algorithm remains truly sublinear as long as $\opers \geq \tmax^{2/3+\Omega(1)}$. In our algorithm, we keep an array $A:[0,\tmax] \rightarrow \{0,1\}$ that represents our solution. $A_i = 1$ means that there is a subset of items whose total weight is equal to $i$ and $A_i = 0$ means otherwise. In the beginning, we set $A_0 \leftarrow 1$ and $A_i \leftarrow 0$ for $1 \leq i \leq \tmax$.
 
Our algorithm utilizes \textsf{$k$-flip-detector} to dynamically solve subset sum however, this requires us to address two issues. The first shortcoming of \textsf{$k$-flip-detector} is that it is designed in a static manner. That is, once we detect the changes to array $A$ after an element insertion, we cannot incorporate that into \textsf{$k$-flip-detector} to be able to answer future queries. The other shortcoming of \textsf{$k$-flip-detector} is that it is only capable of detecting the changes in $A$ when their count is bounded by $k$. Therefore, unless we set $k = O(\tmax)$, we are not able to detect all of the changes that an element addition may incur to $A$. Since the processing time of \textsf{$k$-flip-detector} is $\tilde O(k|A|) = \tilde O(k \tmax)$, setting $k = \tmax$ makes the preprocessing time quadratic which prevents us from obtaining a dynamic algorithm with truly sublinear processing time. 

We resolve the first issue by modifying \textsf{$k$-flip-detector} to obtain a new data structure namely \textsf{$k$-flip-detector*}. The stark difference between \textsf{$k$-flip-detector} and \textsf{$k$-flip-detector*} is that \textsf{$k$-flip-detector*} is able to change the original sequence after the preprocessing step. However, this comes with extra cost for answering queries. Let $\mathsf{cnt}$ be the number of changes made to the initial sequence after the preprocessing step of \textsf{$k$-flip-detector*}. We show that when we count the number of affected indices of $A$, we can incorporate the changes by spending additional time $\tilde O(\mathsf{cnt})$. Similarly, this comes with an extra cost of $\tilde O(\mathsf{cnt})$ for detecting each of the affected indices. We refer the reader to Section~\ref{sec:sub} for more details about \textsf{$k$-flip-detector*}.

We also resolve the second issue via the following trick: We set $k = \tmax^{1/3}$ and initialize the data structure on sequence $A$. After we process each query, we also update the data structure by incorporating the changes to $A$. We use the same data structure for up to $\tmax^{2/3}$ changes to $A$. However, each time the number of changes for an element addition exceeds $k$, instead of using \textsf{$k$-flip-detector*}, we spend time $O(\tmax)$ to discover all of the affected positions of $A$ explicitly. The crux of the argument is that since we only change the values of $A$ from 0 to $1$, the total number of times that happens is bounded by $\tmax / k = \tmax^{2/3}$. Therefore, throughout the life of our algorithm this amounts to a runtime of $\tilde O(\tmax^{5/3})$ and when taking average over all operations, this cost is only $\tilde O(\tmax^{2/3})$ per operation. We also keep track of changes to $A$ after each initialization and once the number of changes exceeds $\tmax^{2/3}$, we initialize the data structure from scratch by using the updated sequence. Since this happens at most $O(\tmax^{1/3})$ many times and each initialization takes time $\tilde O(k \tmax) = \tilde O(\tmax^{4/3})$, the total initialization cost is bounded by $\tilde O(\tmax^{5/3})$ which is $\tilde O(\tmax^{2/3})$ per operation. Also, since the number of changes to $A$ for each data structure never exceeds $\tmax^{2/3}$, the time required to count the number of changes and detect each change is also bounded by $\tilde O(\tmax^{2/3})$ for every operation.

While this gives amortized processing time $\tilde O(\tmax^{2/3})$ when $\tmax$ and $\opers$ are asymptotic, we show that by proper choice of $k$ and tuning other parameters, we can obtain a truly sublinear amortized processing time so long as the number of operations is at least $\tmax^{2/3+\Omega(1)}$. Finally, we would like to note that our algorithm is \textbf{not} aware of the number of operations in advance but manages to keep the amortized processing time bounded in terms of $\opers$. The full description of our algorithm and its analysis is given in Section~\ref{sec:dynamic}. 

\subsection{Dynamic Subset Sum with Offline Access}
The problem becomes more tractable if offline access is provided to the operations in advance. For this setting, we prove that we can process the operations in time $\tilde O(1+\tmax \min{\{\sqrt{\opers},\sqrt{\tmax}\}}/\opers)$ on average. This means that the processing time is truly sublinear so long as $\opers \geq \tmax^{\Omega(1)}$ which matches the lower bound implied from~\cite{abboud2019seth}.

\vspace{0.2cm}
{\noindent \textbf{Theorem}~\ref{thm:offline}, [restated informally]. \textit{	There is an algorithm for dynamic subset sum with offline access whose amortized processing time is bounded by  $\tilde O(1+\tmax \min{\{\sqrt{\opers},\sqrt{\tmax}\}}/\opers)$.\\}}

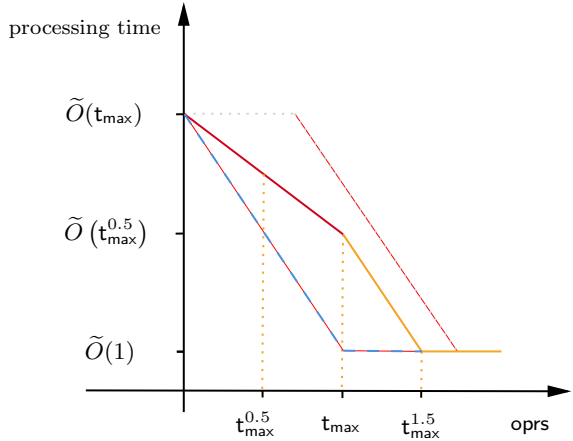
\begin{figure}[ht]
	\centering
	\tikzset{every picture/.style={line width=0.75pt}} 

\begin{tikzpicture}[x=0.75pt,y=0.75pt,yscale=-1,xscale=1]
	
		\draw [color={rgb, 255:red, 210; green, 210; blue, 210 }  ,draw opacity=1 ][dash pattern={on 0.84pt off 2.51pt}] [fill={rgb, 255:red, 255; green, 0; blue, 0 }  ,fill opacity=1 ]   (206,130) -- (288,249.97) ;
	\draw    (146.5,250.09) -- (150,250.09) ;
	\draw [color={rgb, 255:red, 210; green, 210; blue, 210 }  ,draw opacity=1 ] [dash pattern={on 0.84pt off 2.51pt}]   (150.4,130) -- (206,130) ;
	
	\draw [color={rgb, 255:red, 245; green, 166; blue, 35 }  ,draw opacity=1 ] [dash pattern={on 0.84pt off 2.51pt}]  (270,249.97) -- (269.74,269.6) ;
	\draw [color={rgb, 255:red, 245; green, 166; blue, 35 }  ,draw opacity=1 ] [dash pattern={on 0.84pt off 2.51pt}]  (230.33,190.84) -- (229.72,270.04) ;
	\draw    (100.6,269.64) -- (343.5,270.09) ;
	\draw [shift={(345.5,270.09)}, rotate = 180.11] [fill={rgb, 255:red, 0; green, 0; blue, 0 }  ][line width=0.08]  [draw opacity=0] (12,-3) -- (0,0) -- (12,3) -- cycle    ;
	\draw    (150,280.59) -- (150,88.97) -- (150,75.59) ;
	\draw [shift={(150,73.59)}, rotate = 450] [fill={rgb, 255:red, 0; green, 0; blue, 0 }  ][line width=0.08]  [draw opacity=0] (12,-3) -- (0,0) -- (12,3) -- cycle    ;
	\draw    (230.02,269.78) -- (230.02,273.49) ;
	\draw    (269.74,269.6) -- (269.74,273.32) ;
	\draw    (150.4,130.24) -- (146.13,130.28) ;
	\draw    (150.15,190.47) -- (145.88,190.51) ;
	\draw [color={rgb, 255:red, 245; green, 166; blue, 35 }  ,draw opacity=1 ][fill={rgb, 255:red, 255; green, 0; blue, 0 }  ,fill opacity=1 ]   (230.33,190.84) -- (270,249.97) ;
	\draw [color={rgb, 255:red, 245; green, 166; blue, 35 }  ,draw opacity=1 ]   (270,249.97) -- (310.25,249.97) ;
	\draw    (146.5,250.09) -- (150,250.09) ;
	\draw [color={rgb, 255:red, 208; green, 2; blue, 27 }  ,draw opacity=1 ]   (150.4,130.24) -- (230.33,190.84) ;
	\draw [color={rgb, 255:red, 245; green, 166; blue, 35 }  ,draw opacity=1 ] [dash pattern={on 0.84pt off 2.51pt}]  (190.37,160.54) -- (189.57,269.92) ;
	\draw    (189.74,269.49) -- (189.74,273.21) ;
	\draw [color={rgb, 255:red, 74; green, 144; blue, 226 }  ,draw opacity=1 ][fill={rgb, 255:red, 255; green, 0; blue, 0 }  ,fill opacity=1 ] [dash pattern={on 4.5pt off 4.5pt}]  (150.4,130.24) -- (230.5,249.59) ;
	\draw [color={rgb, 255:red, 74; green, 144; blue, 226 }  ,draw opacity=1 ][fill={rgb, 255:red, 255; green, 0; blue, 0 }  ,fill opacity=1 ] [dash pattern={on 4.5pt off 4.5pt}]  (230.5,249.59) -- (270,249.97) ;
	
	\draw (313.5,283.27) node [anchor=north west][inner sep=0.75pt]   [font=\scriptsize] {$\opers$};
	\draw (217.9,281.18) node [anchor=north west][inner sep=0.75pt]  [font=\footnotesize]  {$\tmax$};
	\draw (258.29,279.77) node [anchor=north west][inner sep=0.75pt]  [font=\footnotesize]  {$\tmax^{1.5}$};
	\draw (88.14,180.34) node [anchor=north west][inner sep=0.75pt]  [font=\footnotesize]  {$\widetilde{O}\left(\tmax^{0.5}\right)$};
	\draw (88.79,120.34) node [anchor=north west][inner sep=0.75pt]  [font=\footnotesize]  {$\widetilde{O}( \tmax)$};
	\draw (60.3,81.14) node [anchor=north west][inner sep=0.75pt]  [font=\scriptsize]  {$\text{processing time}$};
	\draw (98.64,240.84) node [anchor=north west][inner sep=0.75pt]  [font=\footnotesize]  {$\widetilde{O}( 1)$};
	\draw (174.76,278.89) node [anchor=north west][inner sep=0.75pt]  [font=\footnotesize]  {$\tmax^{0.5}$};

\end{tikzpicture}
	\caption{The amortized processing time of our dynamic subset sum algorithm  with offline access is shown via solid lines. The red part is for the case $\opers < \tmax$ and the orange part is for the case $\opers \geq \tmax$. The blue dashed polyline shows the lower bound on the amortized processing time which directly follows from the lower bound of~\cite{abboud2019seth} for subset sum. The dotted polyline shows the performance of our dynamic algorithm that does not have offline access to the operations.}\label{fig:offline-time}
\end{figure}

Our algorithm takes advantage of the extra  information about the operations given in advance. Therefore, we first make an initialization in our algorithm and from then on we would be able to process each query in sublinear time. We assume for simplicity here that $\opers = \tmax$.

We divide the operations by $\milestones$ \textit{milestones} in our algorithm. Let the milestones be at operations $m_1, m_2, \ldots, m_{\lambda}$. The time steps between consecutive milestones are chosen in a way that the solution sequence changes in at most $\tmax / \milestones$ indices. We show in Section~\ref{sec:off} how we determine the milestones in time $\tilde O(\milestones \tmax)$.

 In our algorithm, before we process any of the operations, for each milestone $m_i$, we solve the problem from scratch. That is, we consider all of the items that are added prior to that milestone and find which numbers can be made as sum of weights of those items. This takes time  $\tilde {O}(\tmax)$ for each milestone using Bringmann's algorithm \cite{bringmann2017near} which in total amounts to $\tilde {O}(\lambda \tmax)$. After that, we make an array $A[0,\tmax] \rightarrow \{0,1\}$ such that $A_0 = 1$ and $A_i = 0 $ for all $1 \leq i \leq \tmax$ which represents the solution. We then process the operations one by one. If an operation adds an element we update $A$ and otherwise we report the solution by looking at $A$.

When we reach a milestone, say $m_i$, updating $A$ would be trivial; we just look at the solution which is precomputed for that milestone and based on that we update $A$. For operations between two consecutive milestones, say $m_i$ and $m_{i+1}$, we know that $A$ only changes for at most $\tmax/\lambda$ indices. Due to our preprocessing, those indices are also available to us. Therefore, all it takes in order to process each insertion operation is to iterate over those indices and find out whether the solution changes for any of them. Thus, the processing time of each insertion operation is $O(\tmax /\lambda)$. Moreover, the query operations take time $O(1)$ to be processed. Thus, the overall runtime of our algorithm would be $\tilde O(\lambda \tmax + \tmax \opers /\lambda) = \tilde O(\lambda \tmax +  \tmax^2 /\lambda)$ and by setting $\lambda = \sqrt{\tmax}$ we obtain an amortized runtime of $\tilde O(\sqrt{\tmax})$ per operation.

For a larger number of operations ($\opers > \tmax$), we make one more observation that further improves the runtime: for a weight $w$, if at least $\tmax / w$ items with weight $w$ are already added to the problem, we can ignore the rest of the operations that insert an item with weight $w$. This implies that we can narrow down the set of insertion operations to at most $O(\tmax \log \tmax)$ many operations. This enables us to improve the runtime down to $\tilde O(\tmax^{1.5}/\opers)$ in this case. More details about the algorithm is given in Section~\ref{sec:off}.

\subsection{Special Cases: Bounded 3-sum and Bounded $k$-sum}
We also consider special cases of subset sum, namely bounded 3-sum and more generally bounded $k$-sum. We prove in Section~\ref{sec:3sum} that any dynamic algorithm for subset sum yields a dynamic algorithm with the same processing time for bounded $3$-sum. This carries over to the bounded $k$-sum problem except that the average processing time will be multiplied by a function of $k$. 

\vspace{0.2cm}
{\noindent \textbf{Theorem}~\ref{sum-reduction}, [restated informally]. \textit{For any constant $k \geq 3$, (fully) dynamic subset sum is at least as hard as (fully) dynamic bounded $k$-sum.\\}}

We proceed  to design \textbf{fully} dynamic algorithms for bounded 3-sum and bounded $k$-sum in Section~\ref{sec:3sum} that have truly sublinear processing times in terms of $\rmax$.

\vspace{0.2cm}
{\noindent \textbf{Theorem}~\ref{thm:3sum}, [restated informally]. \textit{
		There is an algorithm for fully dynamic $3$-sum whose amortized processing time is bounded by  $\tilde O(\min{\{\opers, \rmax^{0.5}\}})$. \\}}
	
We start by presenting a fully dynamic algorithm for bounded $3$-sum with sublinear amortized processing time. 
In our algorithm, we keep an auxiliary dataset which we refresh after every $\sqrt{\rmax}$ operations. We call such an event a \textit{milestone}. We also maintain a list $R$ which is a pile of operations which have come after the most recent computation of the auxiliary dataset. Therefore the size of $R$ never exceeds $\sqrt{\rmax}$. We also maintain a number $\textsf{cnt}$ throughout the algorithm such that at every point in time, $\textsf{cnt}$ counts the number of tuples $(a,b,c)$ such that $a \in \Aa, b \in \Bb, c \in \Cc$ and $a+b = c$. The ultimate goal is to update the value of $\textsf{cnt}$ after each operation so that we can answer query operations based on whether $\mathsf{cnt} > 0$ or not. In the beginning, and also after every $\sqrt{\rmax}$ operations, we compute the auxiliary dataset. More precisely, we compute 
\[\big\{\Aa, \Bb, -\Cc, \Aa+\Bb, \Aa-\Cc, \Bb - \Cc\big\}. \]
where $\Aa+\Bb$ (or $\Aa-\Bb$) is a set containing $a+b$ (or $a-b$) for every pair of elements $(a,b) \in \Aa \times\Bb$. Similarly, $-\Cc$ contains the negation of each number in $\Cc$.

Note that we treat each of the above data as a \textbf{non binary} vector over range $[-2\rmax, 2\rmax]$ where every element specifies in how many ways a specific number can be made. For instance $\Aa + \Bb$ is a vector over range $[-2\rmax,2\rmax]$ and index $i$ counts the number of pairs $(a,b)$ such that $a \in \Aa$ and $b \in \Bb$ and $a+b = i$. It follows that each of the vectors can be computed in time $O(\rmax \log \rmax)$ using polynomial multiplication. When the number of elements in the sets is small, an alternative way to compute $\Aa + \Bb$ is iterating over all pairs which can be done in time $O(\opers^2)$.
Each time we compute the auxiliary dataset, we also start over with a new pile $R$ with no operations in it. Moreover, after the computation of the auxiliary dataset, we also compute $\textsf{cnt}$ using polynomial multiplication in time $O(\rmax \log \rmax)$. The only exception is the first time we make the auxiliary dataset in which case we know $\mathsf{cnt}$ is equal to 0 and we do not spend any time on computing it.

The auxiliary dataset remains intact until we recompute it from scratch but as new operations are added into $R$, we need to update $\textsf{cnt}$. To update $\textsf{cnt}$ after an insertion operation or a deletion operation, we need to compute the number of triples that affect $\textsf{cnt}$ and contain the newly added (or deleted) number. To explain the idea, let us assume that a number $w$ is added into $\Aa$. There are four types of triples $(w,b,c)$ that can potentially affect $\textsf{cnt}$: (Keep in mind that $b$ and $c$ may refer to some numbers that previously existed in the sets but were removed at some point.)

\begin{itemize}
	\item Both $b$ and $c$ have arrived before the last milestone and thus they are incorporated in it. The number of such pairs is $\big(\Bb-\Cc\big)_{-w}$.
	\item $b$ is added or deleted after the last milestone, but $c$ has come before the last milestone. In this case, we iterate over all new modifications to $\Bb$, which their count is bounded by $|R|$. For each of them we look up $(-\Cc)_{-w-b}$ to verify if there is a triple containing $w$ and $b$. 
	\item In this case, $b$ refers to an operation prior to the last milestone and $c$ refers to an operation after the last milestone. This case is similar to the previous case and we just iterate over new modifications to $\Cc$.
	\item In this case, both $b$ and $c$ refer to operations after the last milestone. The number of such pairs can also be computed in time $O(|R|)$ since by fixing $b$, we just search to verify if $c = w+b$ exists in the new set of operations.
\end{itemize}

Depending on whether or not an operation adds a number or removes a number, we may incorporate the number of tuples into $\textsf{cnt}$ positively or negatively. However, since all the above counts are available in time $O(|R|)$, we can update $\mathsf{cnt}$ in time $O(|R|)$. This implies that the amortized update time of our algorithm is bounded by $\tilde O(\min{\{\opers, \rmax^{0.5}\}})$. We also generalize this algorithm to work for fully dynamic bounded $k$-sum.

\vspace{0.2cm}
{\noindent \textbf{Theorem}~\ref{thm:ksum}, [restated informally]. \textit{
		There is an algorithm for fully dynamic $k$-sum whose amortized processing time is bounded by  $\tilde O(\min{\{\opers^{k-2}, \rmax^{(k-2)/(k-1)}\}})$. \\}}

\section{Dynamic Subset Sum With Sublinear Processing Time}\label{sec:dynamic}
In this section, we present our main result which is a dynamic algorithm for subset sum whose amortized processing time per operation is truly sublinear when the number of  operations is at least $\tmax^{2/3+\Omega(1)}$. We obtain this result through multiple combinatorial ideas married with the classic application of Fast Fourier Transform to design an efficient data structure for subset sum. We begin by providing a formal statement of our result in Theorem~\ref{theorem:main}.

\begin{theorem}\label{theorem:main}
	There exists an algorithm for dynamic subset sum whose amortized processing time is bounded by $O(1 + \tmax^{5/3} \log^2 \tmax \log \opers/\opers)$. The algorithm answers each query operation correctly with probability at least $1-1/\tmax^3$.
\end{theorem}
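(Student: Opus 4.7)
The plan is to combine the \textsf{$k$-flip-detector*} data structure with a two-level amortization argument, tuning the parameters $k$ and a change-threshold $T$ to the value of $\opers$. The algorithm maintains an array $A:[0,\tmax]\to\{0,1\}$ of reachable sums together with a running instance of \textsf{$k$-flip-detector*} built on a recent snapshot of $A$. On each insertion with weight $w$, I first issue a count query to learn the number $c$ of indices of $A$ that would flip from $0$ to $1$; if $c\le k$ I recover the actual positions by the detection procedure, and otherwise I fall back to recomputing the new values of $A$ directly in $O(\tmax)$ time (via a single polynomial multiplication as in the static subset sum discussion). Query operations are answered in $O(1)$ time by reading $A$. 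After each insertion, every newly set index is pushed into the modification log of the data structure; once the log reaches $T$ entries, the instance is discarded and reinitialized from scratch on the current $A$.

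Correctness follows from the guarantees of \textsf{$k$-flip-detector*}: the count is exact, and whenever the true number of flips is at most $k$ the detection step succeeds with constant probability by the analysis of the non-adaptive sampling-plus-binary-coding scheme from the techniques overview. Repeating the detection $O(\log \tmax)$ times and taking a union bound over the (at most polynomially many) operations lifts the per-operation success probability to $1-1/\tmax^{3}$. Because fresh randomness is drawn for every detection call and every reinitialization, the evolving array $A$ does not correlate with the randomness used to answer any single operation, so the high-probability guarantees compose cleanly.

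The runtime bound splits into three sources of cost. Initialization costs $\tilde O(k\tmax)$ per build, and since each rebuild is triggered by $T$ new $0\to1$ flips in $A$, and $A$ contains at most $\tmax$ ones throughout its lifetime, there are only $O(\tmax/T)$ rebuilds, totalling $\tilde O(k\tmax^{2}/T)$. A non-overflow operation pays $\tilde O\bigl(c_{i}(k+\mathsf{cnt}_{i})+\mathsf{cnt}_{i}\bigr)$ where $c_{i}\le k$ is the number of flips and $\mathsf{cnt}_{i}\le T$ is the current log size; summing with $\sum_{i}c_{i}\le\tmax$ yields $\tilde O(k\tmax+T\tmax+\opers T)$. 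An overflow operation costs $O(\tmax)$ but contributes at least $k$ fresh ones to $A$, so there are at most $\tmax/k$ of them, totalling $O(\tmax^{2}/k)$. Balancing these terms, e.g.\ by taking $k=\tmax^{1/3}$ and $T=\tmax^{2/3}$ in the canonical case $\opers=\tmax$ and scaling $k,T$ monotonically with $\opers$ in the general case, produces a total of $\tilde O(\opers+\tmax^{5/3})$ and hence amortized cost $\tilde O(1+\tmax^{5/3}/\opers)$ as claimed.

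The main obstacle is realizing \textsf{$k$-flip-detector*} so that both the count query and each detection call genuinely absorb $\mathsf{cnt}$ pending modifications in $\tilde O(\mathsf{cnt})$ extra time, while remaining compatible with the non-adaptive FFT-based sampling scheme; without this, the $\sum_{i}c_{i}\mathsf{cnt}_{i}$ term blows up and the target runtime is lost. A secondary issue is that $\opers$ is unknown in advance, so $k$ and $T$ must be chosen adaptively; I would handle this by restarting the whole scheme with fresh parameters at a geometric sequence of operation counts, which multiplies the amortized bound only by an $O(\log\opers)$ factor already absorbed in the $\tilde O(\cdot)$.
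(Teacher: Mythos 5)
Your proposal is, for the regime $\opers \le \tmax$, essentially the paper's own proof: the same array $A$ with $O(1)$ queries, the same \textsf{$k$-flip-detector*} interface, the same count-then-detect policy with an $O(\tmax)$ fallback used at most $\tmax/k$ times, rebuilds after $T$ pending flips, the choices $k=\tmax^{1/3}$, $T=\tmax^{2/3}$, the same union-bound argument for the $1-1/\tmax^{3}$ guarantee, and the same geometric-restart trick for unknown $\opers$ with its $O(\log \opers)$ overhead.

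The genuine gap is the regime $\opers>\tmax$, which the theorem also covers and which you propose to close by ``scaling $k,T$ monotonically with $\opers$.'' That lever cannot work. Every insertion triggers a type (i) count query costing $\Theta(\mathsf{cnt})$, and an adversary can first cause $T-1$ flips (so the log sits just below the rebuild threshold forever) and then insert $\opers$ redundant copies of a single weight, each causing zero flips but each still paying $\Theta(T)$ for its count query; so the count queries alone cost $\Theta(\opers T)$. To stay within the claimed total budget $\tilde O(\opers+\tmax^{5/3})$ you would need simultaneously $\opers T = \tilde O(\opers+\tmax^{5/3})$, $k\tmax^{2}/T = \tilde O(\opers+\tmax^{5/3})$ (rebuild cost) and $\tmax^{2}/k = \tilde O(\opers+\tmax^{5/3})$ (overflow cost). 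Eliminating $k$ and $T$ shows these constraints force $\tmax^{4}\opers = \tilde O\bigl((\opers+\tmax^{5/3})^{3}\bigr)$, which fails for every $\opers$ with $\tmax^{1+\Omega(1)} \le \opers \le \tmax^{2-\Omega(1)}$, so no choice of $k,T$ meets the bound there. The paper's fix is not to retune the parameters (they remain $k=\tmax^{1/3}$, $T=\tmax^{2/3}$) but to deduplicate: once a weight $w$ has been inserted $\lfloor \tmax/w \rfloor$ times, all later insertions of $w$ are ignored, which is harmless for correctness and caps the number of insertions that ever touch the data structure at $O(\tmax\log\tmax)$, so the total count-query cost becomes $\tilde O(\tmax \cdot T)=\tilde O(\tmax^{5/3})$. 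You need this observation (or an equivalent one) for the statement to hold for all $\opers$.
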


Theorem~\ref{theorem:main} does not make use of existing subquadratic time algorithms for subset sum and thus can be seen as an independent improved algorithm for subset sum which features flexibility with respect to incremental changes. In other words, the result of Theorem~\ref{theorem:main} leads to an algorithm for subset sum which runs in time $O(n+t^{5/3})$. (Notice that although the runtime is subquadratic in $n+t$, this algorithm is still slower than the more advanced algorithms given for subset sum~\cite{koiliaris2019faster, bringmann2017near}.)

Although the amortized processing time of our algorithm is based on parameter $\opers$,  our algorithm is not aware of this value in advance. Instead, we maintain a parameter $\hat{\opers}$ which 2-approximates the number of operations. We start by setting $\hat{\opers} \leftarrow 1$ and maintain its value intact until the number of processed operations reaches $\hat{\opers}$. Once we reach that threshold, we multiply $\hat{\opers}$ by 2 and run the algorithm all over again. That is, we start anew and reprocess all of the already processed operations and then continue processing new operations from then on. We discuss at the end of the section that this comes with an additional overhead of $O(\log \opers)$ to the average processing time of the algorithm. In the below discussion, we assume w.l.o.g that $\opers$ remains intact throughout the life of the algorithm.

The processing time of our algorithm is independent of the number of items at the time the operations arrive. In fact, our algorithm keeps track of items only because it requires to reprocess the operations each time $\hat{\opers}$ changes. Other than that, our algorithm is oblivious to the items that have come prior to any operation and only takes into account the weight of the newly added item to the problem.

The algorithm is naive when $\hat{\opers} < \tmax^{2/3}$; We keep a 0/1 array of $A$ of size $|\tmax+1|$ which stores the solution for various values of $t$ and after each query operation, we report the answer in time $O(1)$. In this case, the processing time of our algorithm is $O(\tmax)$ per operation. In what follows, we consider $\tmax^{2/3} \leq \hat{\opers} \leq \tmax$ and show how we obtain a sublinear processing time in this case. Later in the section, we discuss how to make the algorithm work when $\hat{\opers} > \tmax$.

In our algorithm, we maintain a 0/1 array $A: [0,\tmax] \rightarrow \{0,1\}$ such that $A_i$ always tells us whether there exists a subset of items whose sum of weights is exactly equal to $i$. Thus, the query operations always take time $O(1)$. The challenging part of the algorithm is to efficiently update array $A$ after every insertion operation. For that purpose, we design a data structure which is parametrized by a variable $k$ namely, \textit{\textsf{$k$-flip-detector*}}. \textsf{$k$-flip-detector*} initializes on a 0/1 array $Z$ and is able to answer queries of the following form: (i) For a given parameter $\alpha$, it reports the number of indices of $Z$ that meet Condition~\eqref{eq:pro}. (ii) For a given parameter $\alpha$ for which no more than $k$ indices of $Z$ meet Condition \eqref{eq:pro}, it will report all such indices. 
\begin{equation}\label{eq:pro}
\alpha \leq i \leq |Z| \hspace{1cm}\text{ and } \hspace{1cm}  Z_i = 0 \hspace{1cm}\text{ and } \hspace{1cm} Z_{i-\alpha} = 1.
\end{equation}
If more than $k$ indices of $Z$ meet Condition \eqref{eq:pro}, \textsf{$k$-flip-detector*} may not give us any meaningful information about their location. Moreover, \textsf{$k$-flip-detector*} supports changes to the initial sequence $Z$. The time complexities of the operations supported by \textsf{$k$-flip-detector*} is as follows:\\[0.5cm]

\fbox{\begin{minipage}{40em}
\vspace{0.3cm}
\hspace{0.5cm}\textsf{\textsf{$k$-flip-detector*}}:\\[-0.4cm]
\begin{itemize}
	\item Initialization of \textsf{$k$-flip-detector*} on a 0/1 array $Z$ takes time $O(k|Z| \log^2 |Z|)$.
	\item Each time we flip an index of $Z$, it takes  $O(1)$ time to update \textsf{$k$-flip-detector*}.
	\item For a given $\alpha$, it takes time $O(\mathsf{cnt})$ to determine how many elements of $Z$ meet Condition \eqref{eq:pro} where $\mathsf{cnt}$ is the number of changes made to $Z$ after its initialization.
	\item If for a given $\alpha$, the number of indices of $Z$ that meet Condition \eqref{eq:pro} is bounded by $q \leq k$, the runtime for detecting all of such indices is $O((q+\mathsf{cnt})q \log |Z|)$. That is, per solution index the algorithm takes time $O((q+\mathsf{cnt}) \log |Z|)$.\\[0.3cm]
\end{itemize}

\end{minipage}}

\vspace{0.5cm}
 
We defer the details of \textsf{$k$-flip-detector*} to Section~\ref{sec:sub} and focus on the overall algorithm here. (A formal statement of the time complexities of \textsf{$k$-flip-detector*} is given in Lemma~\ref{lemma:2}.) For a parameter $1 \leq \Delta$ whose value will be determined at the end of the section, we set $k = \Delta \tmax / \hat{\opers}$ and update array $A$ in the following way. Initially, we set $A_0 \leftarrow 1$ and $A_i \leftarrow 0$ for all $i \in [1, \tmax]$. Also in the beginning, we make an instance of \textsf{$k$-flip-detector*} and initialize it on array $A$. Whenever an operation adds an item with value $w$, we query \textsf{$k$-flip-detector*} to find how many elements of $A$ are affected by adding the new item (these are basically the elements that meet Condition~\eqref{eq:pro} for $\alpha = w$). If their count is more than $k = \Delta \tmax / \hat{\opers}$, we iterate over the entire array $A$ to find the positions of all such changes. If their count is bounded by $k = \Delta \tmax / \hat{\opers}$, we query \textsf{$k$-flip-detector*} to list all those positions. In any case, we update array $A$ accordingly.

We also need to update \textsf{$k$-flip-detector*} as $A$ changes in our algorithm. Every time an element of $A$ changes, we perform the same modification to \textsf{$k$-flip-detector*}. However, when the number of changes exceeds $\hat{\opers}/\Delta$, instead of modifying \textsf{$k$-flip-detector*}, we initialize it from scratch. This takes time $O(k\tmax \log^2 \tmax)$ but it resets the number of edits to zero. Keep in mind that since elements of $A$ only change from $0$ to $1$, the total number of changes is bounded by $\tmax$ and thus we initialize \textsf{$k$-flip-detector*} at most $\tmax / (\hat{\opers}/\Delta) = \Delta \tmax/\hat{\opers}$ times. Thus, the total time required for all initializations combined is 
\begin{equation*}
\begin{split}
(\Delta  \tmax/\hat{\opers})  O( k\tmax \log^2 \tmax) &=  O((\Delta  \tmax /\hat{\opers}) k\tmax \log^2 \tmax)\\
&=  O((\Delta  \tmax/\hat{\opers}) (\Delta \tmax / \hat{\opers})\tmax \log^2 \tmax)\\
& =  O(\Delta^2\tmax^3 \log^2 \tmax/\hat{\opers}^2 ).
\end{split}
\end{equation*}
Thus, per operation, we spend average time $O(\Delta^2 \tmax^3 \log^2 \tmax/\hat{\opers}^3)$ for initializations.  Finally, since we always limit the number of edits to  \textsf{$k$-flip-detector*} by $\hat{\opers}/\Delta$ and $k=\Delta \tmax / \hat{\opers}$, the runtime of queries of type (ii) to \textsf{$k$-flip-detector*} per change to $A$ is bounded by: 
\begin{equation*}
\begin{split}
O((q+\mathsf{cnt}) \log \tmax) & \leq O((k+\mathsf{cnt}) \log \tmax)\\
&\leq O((\Delta \tmax / \hat{\opers}+\hat{\opers}/\Delta) \log \tmax)
\end{split}
\end{equation*}
where $q$ denotes the solution size of a query of type (ii) and $\textsf{cnt}$ denotes the number of changes after initialization. Therefore, the total cost for all changes is bounded by $O((\Delta \tmax / \hat{\opers}+\hat{\opers}/\Delta) \tmax \log \tmax)$. Thus, per operation, we pay a cost of $O((\Delta \tmax / \hat{\opers}^2+1/\Delta) \tmax \log \tmax)$. There is one more cost which is negligible for $\opers \leq \tmax$. Every time a new item is added to the problem, we query \textsf{$k$-flip-detector*} to find out how many elements of $A$ are affected by the new item. This takes time $O(\textsf{cnt}) \leq O(\hat{\opers}/\Delta)$ per operation. 

In total, the amortized processing time per operation of our algorithm is bounded by $O(\Delta^2 \tmax^3 \log^2 \tmax/\hat{\opers}^3) + O((\Delta \tmax / \hat{\opers}^2+1/\Delta) \tmax \log \tmax) + O(\hat{\opers}/\Delta)$. When $\tmax^{2/3} \leq \hat{\opers} \leq \tmax$, by setting $\Delta = \hat{\opers}/\tmax^{2/3}$ we can obtain a dynamic algorithm whose amortized processing time is $O(\tmax \log^2 \tmax/\Delta) = O(\tmax^{5/3} \log^2 \tmax/\hat{\opers})$. 

For the cases that $\opers > \tmax$, we make the following modification to the algorithm. We also keep track of how many items of each weight are added to the problem. Moreover, after an item of weight $w$ is added $\lfloor \tmax / w \rfloor$ times, we ignore all the coming operations that add an item with weight $w$. This does not change the outcome since we never use more than $\lfloor \tmax / w \rfloor$ items of weight $w$ in any solution. Thus, the total number of times that we make type (i) queries to the data structure would be bounded by $\tmax + \lfloor \tmax /2 \rfloor + \lfloor \tmax /4 \rfloor + \ldots + 1 =  O(\tmax \log \tmax)$. Thus, the total time we spend on queries of type (i) would be $O((\tmax \log \tmax)(\hat{\opers}/\Delta))$.  Therefore, per operation, the amortize processing time of our algorithm would be $O(\Delta^2 \tmax^3 \log^2 \tmax/\hat{\opers}^3) + O((\Delta \tmax / \hat{\opers}^2+1/\Delta) \tmax \log \tmax) + O((\tmax \log \tmax)/\Delta)$. Again, by setting by setting $\Delta = \hat{\opers}/\tmax^{2/3}$ we can obtain a dynamic algorithm whose amortized processing time is $O(\tmax \log^2 \tmax/\Delta) = O(\tmax^{5/3} \log^2 \tmax/\hat{\opers})$.

\begin{proof}[of Theorem~\ref{theorem:main}]
	We explained the description of the algorithm and time complexities earlier. Here we just argue why there is an additional overhead of $O(\log \opers)$ in the processing time of the algorithm. Since the average processing time of the algorithm is $O(\tmax^{5/3} \log^2 \tmax/\hat{\opers})$, each time we multiply $\hat{\opers}$ by $2$, the total time we spend on all operations will be bounded by $O(\tmax^{5/3} \log^2 \tmax)$. Since the number of times we multiply $\hat{\opers}$ by $2$ is bounded by $\log \opers$, the overall runtime over all operations would be $O(\tmax^{5/3} \log^2 \tmax \log \opers)$. Thus, per operation we spend time $O(1+\tmax^{5/3} \log^2 \tmax \log \opers/\opers)$. There is an additional additive $+1$ term in the runtime since we spend an additional time $O(1)$ on each query for receiving it.
	
	In terms of correctness, we point out that each query of type (ii) to \textsf{$k$-flip-detector*} may fail with probability at most $1/\tmax^4$ and since the number of those queries never exceeds $\tmax$, our algorithm always reports a correct solution with probability at least $1-\tmax^3$.
\end{proof}

\subsection{\textsf{$k$-flip-detector*}}\label{sec:sub}
The goal of this section is to present our algorithm for \textsf{$k$-flip-detector*}. To this end, we begin by a simpler data structure, namely \textsf{$k$-asymmetric-flip-detector} and then show a reduction from  \textsf{$k$-flip-detector*} to \textsf{$k$-asymmetric-flip-detector}. In \textsf{$k$-asymmetric-flip-detector}, we are given two 0/1 sequences $X$ and $Y$ and the goal to preprocess the sequences in a way that our algorithm would be able to answer queries of the following forms efficiently: (i) Given an $1-|X| \leq \alpha \leq |Y|-1$, how many indices $i$ of $Y$ meet Condition~\eqref{condition:con}?

\begin{equation}\label{condition:con}
1 \leq i \leq |Y|, 1 \leq i-\alpha \leq |X| \hspace{1cm}\text{ and } \hspace{1cm}  Y_i = 0 \hspace{1cm}\text{ and } \hspace{1cm} X_{i-\alpha} = 1.
\end{equation}

(ii) Also in case for some $\alpha$, the answer to the type (i) query is bounded by $k$, \textsf{$k$-asymmetric-flip-detector} should also be able to detect such positions. We begin by explaining how type (i) queries can be answered and then we move on to type (ii) queries. This can be thought of as a special case of \textsf{$k$-asymmetric-flip-detector} for the case of $k=0$.

It is well known that for any 0/1 sequences $A$ and $B$, one can spend time $O((|A|+|B|) \log (|A|+|B|))$ and construct a (not necessarily 0/1) sequence $C$ of size $|A|+|B|$ such that for any $2 \leq i \leq |A|+|B|$ we have $C_i = \sum_{1 \leq j \leq |A|, 1 \leq i-j \leq |B|} A_j B_{i-j}$. This is a classic application of Fast Fourier Transform to the polynomial multiplication of 0/1 sequences. 

\textsf{$0$-asymmetric-flip-detector} can be solved using the above technique. In the preprocessing step, we construct a 0/1 sequence $\hat{Y}$ of size $|Y|$ such that $\hat{Y}_i = 1-Y_{|Y|-i+1}$. In other words, we change $Y$ in two ways to obtain $\hat{Y}$: We first reverse it, and then we flip each of its elements. We then use the Fast Fourier Transform technique to construct a sequence of size $|X|+|\hat{Y}|$, namely $R$ such that for any $2 \leq i \leq |X|+|\hat{Y}|$ we have $R_i = \sum_{1 \leq j \leq |X|, 1 \leq i-j \leq |\hat{Y}|} X_j \hat{Y}_{i-j}$. This basically completes the preprocessing needed for \textsf{$0$-asymmetric-flip-detector}.

When a query of type (i) with parameter $\alpha$ is given to \textsf{$0$-asymmetric-flip-detector}, the answer is already computed in sequence $R$. More precisely, for $i = |Y|+1 - \alpha$ we have:
\begin{equation*}
\begin{split}
R_i & = \sum_{1 \leq j \leq |X|, 1 \leq i-j \leq |\hat{Y}|} X_j \hat{Y}_{i-j} \\
& = \sum_{1 \leq j \leq |X|, 1 \leq i-j \leq |\hat{Y}|} X_j (1-Y_{|Y|-(i-j)+1})\\
& = \sum_{1 \leq j \leq |X|, 1 \leq |Y|+1 - \alpha-j \leq |\hat{Y}|} X_j (1-Y_{|Y|-(|Y|+1 - \alpha-j)+1})\\
& = \sum_{1 \leq j \leq |X|, 1 \leq |Y|+1 - \alpha-j \leq |\hat{Y}|} X_j (1-Y_{j+\alpha})\\
& = \sum_{1 \leq j \leq |X|, -|Y| \leq - \alpha-j \leq -1} X_j (1-Y_{j+\alpha})\\
& = \sum_{1 \leq j \leq |X|, 1 \leq j+\alpha \leq |Y|} X_j (1-Y_{j+\alpha})
\end{split}
\end{equation*}
and thus $R_i$ correctly gives us the answer for query type (i) with parameter $\alpha$.

\begin{figure}[ht]
	\centering
	\input{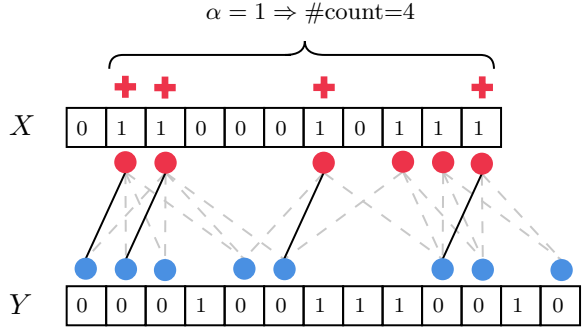}
	\caption{An example of $k\textsf{-asymmetric-flip-detector}$ for $k=0$. Two binary vectors $X$ and $Y$ are given  and we wish to count the number of indices $i$ so that $Y_i = 0$ and $X_{i-\alpha}=1$ for a given parameter $\alpha$.}\label{fig:flip-detector}
\end{figure}

In the above algorithm, the preprocessing time is $O((|X|+|Y|) \log (|X|+|Y|))$ and each query can be answered in time $O(1)$. We now extend the above idea to present a solution for \textsf{$1$-asymmetric-flip-detector}. In the case of \textsf{$1$-asymmetric-flip-detector}, when for some $\alpha$ the solution to type (i) query is exactly equal to $1$, we should be able to also find the corresponding index.

As part of the preprocessing step of \textsf{$1$-asymmetric-flip-detector}, we make sequences $\hat{Y}$ and $R$. In addition to that, we construct $l+1 = \lceil \log |Y| \rceil+1$ auxiliary sequences $\bar{Y}_0, \bar{Y}_1, \bar{Y}_2, \ldots, \bar{Y}_l$ each of size $|Y|$. For each $0 \leq \gamma \leq l$, $\bar{Y}_\gamma$ is formulated in the following way:
\begin{itemize}
	\item $\forall 1 \leq i \leq |Y|$ such that the $\gamma$'th bit of the base 2 representation of $|Y|-i+1$ is equal to 0, we have $(\bar{Y}_\gamma)_i = 0$. This implies that $Y_{|Y|-i+1}$ will not be incorporated in the construction of $\bar{Y}_\gamma$.
	\item Otherwise we have $(\bar{Y}_\gamma)_i = \hat{Y}_i$.
\end{itemize}
In other words, $\bar{Y}_\gamma$ is exactly the same as $\hat{Y}$ except that it only takes into account the indices of $Y$ whose $\gamma$'th bit in the base-2 representation is equal to $1$. Similarly, we make $l+1$ additional sequences $\bar{R}_0, \bar{R}_1, \bar{R}_2, \ldots, \bar{R}_l$ in the same way we make $R$. The only difference is that instead of computing the polynomial multiplication of $X$ and $\hat{Y}$, $\bar{R}_\gamma$ takes the multiplication of $X$ and $\bar{Y}_\gamma$.

Similar to \textsf{$0$-asymmetric-flip-detector}, \textsf{$1$-asymmetric-flip-detector} can also answer queries of type (i) in $O(1)$ time. Now, consider a query of type (ii) with parameter $\alpha$ such that its answer for the type (i) query is exactly equal to $1$. This means that there is a unique index $s$ in a way that $X_{s-\alpha}  = 1$ and $Y_s = 0$. Moreover, we already know that $R_{|Y|+1-\alpha} = 1$ holds.

To discover $s$, we take into account sequences $\bar{R}_0, \ldots, \bar{R}_l$. For any  $0 \leq \gamma \leq l$, it follows from the construction of $\bar{Y}_\gamma$ that $(\bar{R}_\gamma)_{|Y|+1-\alpha} = 1$ if and only if the $\gamma$'th bit of the base-2 representation of $s$ is equal to $1$. This means that by putting the values of $(\bar{R}_\gamma)_{|Y|+1-\alpha}$ next to each other, we get the base-2 representation of $s$. Thus, we can answer type (ii) queries in $O(\log |Y|)$ time. 

Unfortunately, the above idea does not generalize to $k > 1$ for \textsf{$k$-asymmetric-flip-detector}. For $k > 1$, we make $10k \log |Y|$ instances of \textsf{$1$-asymmetric-flip-detector} and put them in baskets of size $10 \log |Y|$. We also label the baskets with numbers $1$ to $k$. We use the same initialization algorithm for each instance, except that for an instance in basket $i$, we take into account each index of $Y$ with probability $1/i$. In other words, for a pair of initialization sequences $X$ and $Y$ we initialize an instance of basket $i$ in the following way:
\begin{itemize}
	\item We make a 0/1 sequence $Y'$ which is initially equal to $Y$.
	\item We iterate over its indices and turn each element into $1$ with probability $(i-1)/i$. For all such indices $Y'_j$ that are intentionally set to $1$, we say the corresponding \textsf{$1$-asymmetric-flip-detector} does not take into account the $j$'th element of $Y$. We also store which indices are taken into account for that instance.
	\item We initialize the instance based on sequences $X$ and $Y'$.
\end{itemize}
There is a redundancy in the basket with label 1 that is all of its \textsf{$1$-asymmetric-flip-detector}s are all made the same way but for the sake of simplicity, we do not optimize our algorithm for that particular basket. The intuition behind the algorithm is the following: Let for a query of type (ii) with value $\alpha$, $t$ denote the number of indices of $Y$ which meet Condition~\eqref{condition:con}. If $t \leq k$, with high probability, there is at least one instance of \textsf{$1$-asymmetric-flip-detector} in basket $t$ that takes into account exactly one of those indices. Based on the ideas above, we can detect that particular index in time $O(\log |Y|)$ in the following way: We iterate over all \textsf{$1$-asymmetric-flip-detector}s in basket $t$ and find the one which reports $1$ to us when making a query type (i) with value $\alpha$. We then make a query type (ii) with value $\alpha$ to that \textsf{$1$-asymmetric-flip-detector}.

We then move on to the basket with label $t-1$. With the same argument, with high probability, there is at least one instance of \textsf{$1$-asymmetric-flip-detector} in that basket, that takes into account exactly one of the $t-1$ remaining indices of $Y$ that meet Condition~\eqref{condition:con}. However, when giving $\alpha$ to that instance as a query type (i), the answer may not be equal to $1$ since the solution index that we have detected already may also be incorporated in that \textsf{$1$-asymmetric-flip-detector}. (In that case, the value reported for query type (i) will be equal to $2$.) Thus, we need to manually update the answers: If the already detected index is incorporated in a \textsf{$1$-asymmetric-flip-detector}, we decrease the reported value by $1$. This also applies to the additional queries we make in order to locate the solution indices. Therefore, the second index that meets Condition~\eqref{condition:con} can also be found in time $O(\log |Y|)$.

When trying to locate the third index, we move to basket $t-2$ but this time we need to refine the reported values by considering both of the detected solution indices. This still gives us runtime $O(\log |Y|)$ but as we detect more solution indices the runtime increases proportionally to the number of discovered indices. Therefore, there is an additional $O(t)$ overhead in the runtime of the algorithm and thus the overall runtime for detecting all such indices will be $O(t^2 \log |Y|)$ which is $O(t \log |Y|)$ per each index. This completes the implementation of \textsf{$k$-asymmetric-flip-detector}.

\begin{lemma}\label{lemma:0}
	\textsf{$k$-asymmetric-flip-detector} preprocesses the inputs in time $O(k(|X|+|Y|) \log^2 (|X|+|Y|))$ and can answer queries of type (i) in time $O(1)$ and queries of type (ii) in time $O(t^2 \log |Y|)$  when the number of solution indices is equal to $t \leq k$. The answers to queries of type (ii) are correct with probability at least $1-1/|Y|^4$.
\end{lemma}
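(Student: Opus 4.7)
The plan is to verify each of the four claims of Lemma~\ref{lemma:0} — preprocessing time, type (i) query time, type (ii) query time, and the failure probability — by careful accounting of the construction described above. The key structural facts to exploit are: (a) within each basket the $10\log|Y|$ instances are independent random experiments, and (b) within a single \textsf{$1$-asymmetric-flip-detector}, the $l+1 = O(\log|Y|)$ auxiliary FFTs encode the index of a unique solution position in binary so that it can be recovered bit by bit.

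For the preprocessing bound I would first cost out a single \textsf{$1$-asymmetric-flip-detector}: building $\hat{Y}$ and the $O(\log|Y|)$ masked variants $\bar{Y}_\gamma$ takes $O(|Y|\log|Y|)$ time, and each of the $O(\log|Y|)$ polynomial multiplications $\bar{R}_\gamma = X \ast \bar{Y}_\gamma$ costs $O((|X|+|Y|)\log(|X|+|Y|))$ via FFT, giving a per-instance total of $O((|X|+|Y|)\log^2(|X|+|Y|))$. Summing over the $O(k\log|Y|)$ instances distributed among the $k$ baskets then yields the stated preprocessing bound once the logarithmic factors are absorbed into the big-Oh.

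Type (i) is essentially free: the requested count is precisely $R_{|Y|+1-\alpha}$, computed at initialization, so the lookup is $O(1)$. For type (ii) I would discover the solution indices greedily, starting from basket $t$ and descending. If $j$ indices have already been found, then to locate the next one I scan the $O(\log|Y|)$ instances of basket $t-j$ and, for each, subtract from its stored count the contribution of the $O(j)$ already-discovered indices that happen to lie in the mask of that instance. As soon as an adjusted count equals $1$, I use the $O(\log|Y|)$ bit-arrays $\bar{R}_\gamma$ of that instance — again applying the same $O(j)$ correction per bit — to read off the binary representation of the new index. Each newly discovered index therefore costs $O(j\log|Y|)$, and summing over $j=0,\ldots,t-1$ yields the claimed $O(t^2\log|Y|)$ bound.

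Correctness reduces to showing that at every iteration basket $i = t-j$ contains at least one instance which isolates exactly one of the $i$ remaining solution indices. In basket $i$ each index of $Y$ is kept independently with probability $1/i$, so the probability that a given instance isolates exactly one of the $i$ remaining solution indices is $i \cdot \tfrac{1}{i}\bigl(1-\tfrac{1}{i}\bigr)^{i-1} \geq 1/e$, independently of how many already-discovered indices it happens to contain (those are handled by the adjustment step). Since the $10\log|Y|$ instances per basket are independent, the probability that none isolates a singleton is at most $(1-1/e)^{10\log|Y|} \leq |Y|^{-5}$; a union bound over the at most $|Y|$ iterations keeps the total failure probability below $|Y|^{-4}$. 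The main subtlety I anticipate is ensuring the bookkeeping for already-discovered indices is actually implementable in $O(j)$ time per adjustment — for instance by maintaining, for each instance, a hash set of its masked positions so that membership tests are $O(1)$ — after which the remaining ingredients are standard applications of FFT-based polynomial multiplication and an independence-based tail bound.
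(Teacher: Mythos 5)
Your proposal is correct and follows essentially the same route as the paper: the same per-instance FFT accounting, the same greedy discovery from basket $t$ downward with an $O(j)$ adjustment for already-found indices, and the same failure analysis via the isolation probability $\bigl(\tfrac{t-1}{t}\bigr)^{t-1}\geq 1/e$, amplification over the $10\log|Y|$ independent instances per basket, and a union bound over the $t$ iterations (the paper's own proof only writes out this probability part and defers the time bounds to the preceding discussion). One minor caveat: your count of $O(k\log|Y|)$ instances at $O((|X|+|Y|)\log^{2}(|X|+|Y|))$ each actually yields a $\log^{3}$ preprocessing bound rather than the stated $\log^{2}$, but this looseness is inherited from the paper's own statement rather than introduced by your argument.
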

\begin{proof}
	We discussed the algorithms and time complexities above. Here we just bound the failure probability. When there are $t$ indices that meet Condition~\eqref{condition:con},  an instance of \textsf{$1$-asymmetric-flip-detector} in basket $t$ will take exactly one of such indices into account with probability 
	$$t((1/t) (\frac{t-1}{t})^{t-1}) = (\frac{t-1}{t})^{t-1} \geq 1/e > 0.367.$$
	Therefore, among the $10 \log |Y|$ instances of basket $t$, with probability at least $1-1/|Y|^5$, there is at least one instance that takes exactly one of the solution indices into account. Since we repeat this proceduter $t$ times to detect all the solution indices, the failure probability will be multiplied by an additional factor $t$. Therefore, our algorithm answers each query of type (ii) correctly with probability at least $1-t/|Y|^5 \geq 1-1/|Y|^4$. Also, type (i) queries are always reported correctly since no randomness is used in their algorithm.
\end{proof}

The next step is to modify \textsf{$k$-asymmetric-flip-detector} to make it flexible with respect to changes to initial sequences $X$ and $Y$. We call such a data structure \textsf{$k$-asymmetric-flip-detecto*}. \textsf{$k$-asymmetric-flip-detecto*} is the same as \textsf{$k$-asymmetric-flip-detector} except that in addition to the two query types, it will also be able to handle changes to initial arrays $X$ and $Y$ in the following form: Each change comes either in terms of flipping one element of $X$ or one element of $Y$. Our algorithm handles these changes in time $O(1)$ but these changes will incur an overhead to the runtime of answering query types (i) and (ii).

The overall idea is simple: We use an instance of \textsf{$k$-asymmetric-flip-detector} but we also keep track of the changes that are made to the input sequence after initialization. Once a query of type (i) with parameter $\alpha$ is given, we make the query to \textsf{$k$-asymmetric-flip-detector} to find the answer had the changes not been made to the sequence. We then iterate over the changes and see how the changes would affect the outcome. For every change, it takes time $O(1)$ to see if this affects the outcome: If an element of $X$, say $X_i$, is modified, we look at $Y_{i+\alpha}$ and verify if this change affects the outcome. If so, we update the solution accordingly. Thus, this incurs a multiplicative overhead to the runtime of answering queries of types (i) which is proportional to the number of changes.

Queries of type (ii) are also answered in the same way. However, when the number of solution indices is equal to $t \leq k$, instead of having an overhead $O(t)$ for fixing the reported values, the overhead would be equal to $O(t+\mathsf{cnt})$ where $\mathsf{cnt}$ is the number of changes made to the data structure after its initialization.

\begin{lemma}(as a Corollary of Lemma~\ref{lemma:0})\label{lemma:1}
	\textsf{$k$-asymmetric-flip-detector*} preprocesses the inputs in time $O(k(|X|+|Y|) \log^2 (|X|+|Y|))$. Each modification to \textsf{$k$-asymmetric-flip-detector*} takes time $O(1)$. When the number of changes after the initialization is equal to $\mathsf{cnt}$, \textsf{$k$-asymmetric-flip-detector*} can answer queries of type (i) in time $O(\mathsf{cnt})$ and queries of type (ii) in time $O((t+\mathsf{cnt})t \log |Y|)$ when the number of solution indices is equal to $t \leq k$. The answers to queries of type (ii) are correct with probability at least $1-1/|Y|^4$.
\end{lemma}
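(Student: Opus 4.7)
The plan is to wrap the $k$-asymmetric-flip-detector of Lemma~\ref{lemma:0} with a deferred-update mechanism: initialize a single such detector on the original arrays $X^{(0)}, Y^{(0)}$ passed to $k$-asymmetric-flip-detector*, and never propagate subsequent modifications into it, instead appending each flip to a list $L$. Initialization therefore inherits the bound $O(k(|X|+|Y|)\log^2(|X|+|Y|))$ from Lemma~\ref{lemma:0}, and each modification is $O(1)$: just push one entry onto $L$.

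For a query of type (i) with parameter $\alpha$, I would first read the count $c_0$ from the underlying detector in $O(1)$ time, and then walk through $L$ to compute the signed adjustment that converts $c_0$ into the count under the current $X,Y$. For each position $p$ that appears in $L$, we look up (in $O(1)$) whether the pair $(X_{p-\alpha}, Y_p)$ if $p$ is a $Y$-flip, or $(X_p, Y_{p+\alpha})$ if $p$ is an $X$-flip, satisfies Condition~\eqref{condition:con} under the original and under the current arrays; the contribution of that pair to the count changes by the difference. Deduplicating $L$ by position up front (e.g.\ via hashing) in $O(\mathsf{cnt})$ time ensures that no pair is double-counted when a position has been flipped multiple times or when both endpoints of a pair have been changed, so the total cost is $O(\mathsf{cnt})$.

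For a query of type (ii) with $t \leq k$ current solution indices, I would replay the basket-scanning strategy of Lemma~\ref{lemma:0}, but every type-(i) count that it reads off a $1$-asymmetric-flip-detector contained in a basket is corrected in two ways: (a) by the $O(\mathsf{cnt})$ sweep above, restricted to positions that this particular basket instance kept at initialization, and (b) by subtracting the contribution of the $j-1$ already-located current solution indices. Both corrections together cost $O(t+\mathsf{cnt})$ per count. Locating each of the $t$ current solution indices then requires scanning the $O(\log |Y|)$ instances of the appropriate basket, and, once a singleton instance is isolated, reading off its $O(\log |Y|)$ bit-projections $\bar{R}_\gamma$---each again adjusted by the same $O(t+\mathsf{cnt})$ procedure---to reconstruct the index in the current array rather than in the original array. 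Summed over all $t$ indices, this yields the stated $O((t+\mathsf{cnt})t\log|Y|)$ bound.

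The main obstacle is the randomized guarantee: because the basket instances were sampled at initialization with respect to the \emph{original} arrays, one must verify that the probabilistic analysis of Lemma~\ref{lemma:0}---that w.h.p.\ some basket-$t$ instance keeps exactly one of the $t$ solution indices---still applies to the \emph{current} solution set. Assuming the modification sequence is oblivious to the internal random choices made at initialization (the standard assumption also implicit in Lemma~\ref{lemma:0}), each current solution index is kept independently with probability $1/t$ in any basket-$t$ instance, so the singleton event holds with probability $\geq 1/e$ per instance and the union bound over the $10\log|Y|$ instances and over the $t$ discovery rounds gives failure probability at most $1/|Y|^4$. With this invariant in place, every remaining time bound in Lemma~\ref{lemma:1} follows mechanically from Lemma~\ref{lemma:0}.
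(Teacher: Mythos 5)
Your proposal is correct and follows essentially the same route as the paper: a lazy-update wrapper that keeps the static \textsf{$k$-asymmetric-flip-detector} intact, logs each flip in $O(1)$, and corrects every type (i) count and every basket/bit-projection count used in type (ii) queries by sweeping the change list, giving the $O(\mathsf{cnt})$ and $O((t+\mathsf{cnt})t\log|Y|)$ bounds. The extra care you take (deduplicating the change list and noting that the obliviousness of modifications to the initialization randomness preserves the $1/e$ singleton probability for the \emph{current} solution set) only makes explicit points the paper leaves implicit.
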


We are now ready to discuss the details of \textsf{$k$-flip-detector*}. Recall that in this data structure, an initial 0/1 array $Z$ is given and there are two types of queries. In type (i), for a value $\alpha$, our algorithm needs to find out how many elements of $Z$ meet Condition~\eqref{eq:pro} (For clarity, we restate Condition~\eqref{eq:pro} below). For type (ii) queries, we are given an $\alpha$ such that the solution of type (i) query with value $\alpha$ is bounded by $k$ and we need to detect all the positions that meet Condition~\eqref{eq:pro}.

\begin{equation}
\alpha \leq i \leq |Z| \hspace{1cm}\text{ and } \hspace{1cm}  Z_i = 0 \hspace{1cm}\text{ and } \hspace{1cm} Z_{i-\alpha} = 1. \tag{\ref{eq:pro}}
\end{equation}

The reduction from \textsf{$k$-flip-detector*} to \textsf{$k$-asymmetric-flip-detector*} is straightforward. For a given 0/1 sequence $Z$, we construct a \textsf{$k$-asymmetric-flip-detector*} with both parameters $X = Z$ and $Y=Z$. Moreover, once an element of $Z$ changes, we report the two corresponding changes to the \textsf{$k$-asymmetric-flip-detector*} instance. Whenever a query of type (i) or a query of type (ii) arrives, we directly forward that to the \textsf{$k$-asymmetric-flip-detector*} and report the answer to the output.

\begin{lemma}(as a Corollary of Lemma~\ref{lemma:1})\label{lemma:2}
	\textsf{$k$-flip-detector*} preprocesses the inputs in time $O(k|Z| \log^2 |Z|)$. Each modification to \textsf{$k$-asymmetric-flip-detector*} takes time $O(1)$. When the number of changes after the initialization is equal to $\mathsf{cnt}$, \textsf{$k$-flip-detector*} can answer queries of type (i) in time $O(\mathsf{cnt})$ and queries of type (ii) in time $O((t+\mathsf{cnt})t \log |Z|)$ when the number of solution indices is equal to $t \leq k$. The answers to queries of type (ii) are correct with probability at least $1-1/|Z|^4$.
\end{lemma}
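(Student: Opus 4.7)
The plan is to derive Lemma~\ref{lemma:2} as a direct corollary of Lemma~\ref{lemma:1} via the reduction already outlined before the statement. Concretely, given the input sequence $Z$, I would instantiate a single \textsf{$k$-asymmetric-flip-detector*} with both initial sequences set to $Z$, i.e., $X \leftarrow Z$ and $Y \leftarrow Z$. The preprocessing cost for that instance is $O(k(|X|+|Y|) \log^2(|X|+|Y|)) = O(k|Z| \log^2 |Z|)$ by Lemma~\ref{lemma:1}, which matches the bound claimed for \textsf{$k$-flip-detector*}.

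The first thing to verify is the equivalence of the two search conditions. Condition~\eqref{eq:pro} asks for indices $i$ with $\alpha \leq i \leq |Z|$, $Z_i = 0$, and $Z_{i-\alpha}=1$. Under the substitution $X = Y = Z$, Condition~\eqref{condition:con} asks for indices $i$ with $1 \leq i \leq |Y|$, $1 \leq i - \alpha \leq |X|$, $Y_i = 0$, and $X_{i-\alpha}=1$, i.e., exactly the same set of indices (for $\alpha \geq 1$; the corner case $\alpha = 0$ yields an empty solution set under Condition~\eqref{eq:pro} anyway, since it would require $Z_i = 0$ and $Z_i = 1$ simultaneously). Hence every query of type (i) or type (ii) to \textsf{$k$-flip-detector*} can be forwarded verbatim to the underlying \textsf{$k$-asymmetric-flip-detector*}.

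For modifications, whenever an index of $Z$ is flipped, I would issue two updates to the underlying instance, one flipping the corresponding entry of $X$ and one flipping the corresponding entry of $Y$. Each such update costs $O(1)$ by Lemma~\ref{lemma:1}, so the per-modification cost remains $O(1)$. After $\mathsf{cnt}$ changes to $Z$, the underlying data structure has undergone $2\mathsf{cnt}$ modifications, which only changes the bounds of Lemma~\ref{lemma:1} by a constant factor; thus type (i) queries still cost $O(\mathsf{cnt})$ and type (ii) queries still cost $O((t+\mathsf{cnt})t \log |Z|)$ when the answer size is $t \leq k$. The failure probability bound $1 - 1/|Y|^4 = 1 - 1/|Z|^4$ transfers directly.

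There is essentially no hard step here; the only thing worth double-checking is that the diagonal instantiation $X = Y$ does not break any assumption in Lemma~\ref{lemma:1}, since the randomness used inside \textsf{$k$-asymmetric-flip-detector*} (the sampling that populates the $k$ baskets of \textsf{$1$-asymmetric-flip-detector}) only touches the $Y$-side, so correlating $X$ with $Y$ does not affect its analysis. With that noted, assembling the preprocessing time, update time, and query-time bounds from Lemma~\ref{lemma:1} under the translation above yields exactly the statement of Lemma~\ref{lemma:2}.
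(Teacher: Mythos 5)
Your reduction is exactly the paper's own proof: instantiate a single \textsf{$k$-asymmetric-flip-detector*} with $X = Y = Z$, forward both query types verbatim, and translate each flip of $Z$ into two $O(1)$ updates, so all bounds of Lemma~\ref{lemma:1} carry over with only constant-factor changes. No meaningful difference in approach; the extra sanity checks (the index-range equivalence and the harmlessness of correlating $X$ with $Y$) are fine but not needed beyond what the paper already asserts.
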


\section{Dynamic Subset  Sum with Offline Access}\label{sec:off}
In this section, we consider the dynamic subset sum problem when offline access to the operations is available.
For a time step $1 \leq i \leq \opers$, let $\sums_i \subseteq [0,\tmax] $ be a set of positive integers where $\alpha \in \sums_i$ if and only if there exists a subset $S$ of items added to the problem in the first $i$ operations such that $\sum_{j \in S}{w_j} = \alpha$. In our algorithm, we compute $\sums_i$ for some of the time steps before starting to process the queries and use this additional information to improve the amortized processing time of the operations.

It follows that $\sums_i$s are monotone in that $\sums_i \subseteq \sums_{i+1}$ holds for all $1 \leq i < \opers$. Moreover, $\sum |\sums_{i+1} \setminus \sums_i| \leq \tmax$ holds by definition. Our algorithm for dynamic subset sum with offline access takes overall time $\tilde O(\opers+\tmax \min{\{\sqrt{\opers},\sqrt{\tmax}\}})$ which is $\tilde O(1+\tmax \min{\{\sqrt{\opers},\sqrt{\tmax}\}}/\opers)$ per operation. 


\SetKw{Continue}{continue}
\SetKw{Break}{break}
\SetKw{Return}{return}

\begin{algorithm}[ht]
	\SetAlgoLined
	\KwIn{The maximum target threshold $\tmax$, a sequence of $\opers$ operations of following form: either (1) addition of an item $w_j$, or (2) query $t_j$.}
	$\important, \mathcal{Q} \gets \textsf{FilterOperations}(\opers,\tmax)$\;
	$m_0 \gets 0$\;
	\For{$i \in [\lambda]$}{
		\tcp{We find $\sums_j$ whenever needed by calling $\textsf{Bringmann}(\{w_l \mid l \in \important \cap [j]\})$}
		Binary search on variable $m_i$ comparing $|\sums_{m_i}| \geq i\tmax/\milestones$\;
		Let $X_{i} \gets  \sums_{m_{i}-1} \setminus \sums_{m_{i-1}}$\;
	}
$i \gets 1$\;
	$A \gets (1, 0, \ldots, 0) \in \{0,1\}^{\tmax}$\;
	\For{$j \in \important \cup \mathcal{Q}$ \emph{sorted by} $j$}{
		\eIf{$j \in \mathcal{Q}$}{
			Answer $\textsf{Yes}$ if $t_j \in A$ and $\textsf{No}$ otherwise\;
		}{
			\If{$j \in \{ m_i, m_i-1 \mid i \in [\lambda]\}$}{
				$i = \max{\{ i \mid  i \in [\lambda] \textsf{ and } 0 \leq m_{i-1} - j \leq 1  \}} + 1$\;
				$A \gets \sums_j$\;
			}
			\For {$x \in X_i$}{
				\If{$x - w_j \in A \textsf{\emph{ and }} x \not\in A$}{
					Add $x$ to $A$\;
				}
			}	
		}
	}		
	\caption{Dynamic Subset Sum with Offline Access $O(1+\tmax\min{\{\sqrt{\opers}, \sqrt{\tmax}\}}/\opers)$ amortized processing time}\label{alg:offline}
\end{algorithm}

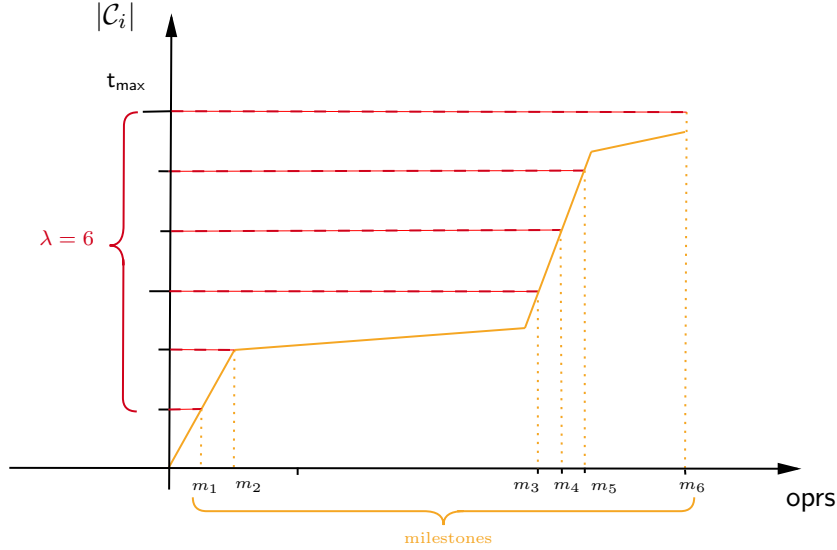
\begin{figure}[ht]
	\centering
	\tikzset{every picture/.style={line width=0.75pt}} 

\begin{tikzpicture}[x=0.75pt,y=0.75pt,yscale=-1,xscale=1]
	
	\draw [color={rgb, 255:red, 245; green, 166; blue, 35 }  ,draw opacity=1 ] [dash pattern={on 0.84pt off 2.51pt}]  (222.3,210.14) -- (221.98,269.84) ;
	\draw [color={rgb, 255:red, 208; green, 2; blue, 27 }  ,draw opacity=1 ][fill={rgb, 255:red, 255; green, 0; blue, 0 }  ,fill opacity=1 ] [dash pattern={on 4.5pt off 4.5pt}]  (189.07,119.96) -- (398.97,119.59) ;
	\draw [color={rgb, 255:red, 245; green, 166; blue, 35 }  ,draw opacity=1 ] [dash pattern={on 0.84pt off 2.51pt}]  (450.36,90.09) -- (449.54,270.59) ;
	\draw [color={rgb, 255:red, 245; green, 166; blue, 35 }  ,draw opacity=1 ]   (189.03,269.74) -- (222.3,210.14) ;
	\draw    (108.76,269.64) -- (506.27,270.09) ;
	\draw [shift={(508.27,270.09)}, rotate = 180.06] [fill={rgb, 255:red, 0; green, 0; blue, 0 }  ][line width=0.08]  [draw opacity=0] (12,-3) -- (0,0) -- (12,3) -- cycle    ;
	\draw    (189.35,280.59) -- (190.51,42.2) ;
	\draw [shift={(190.52,40.2)}, rotate = 450.28] [fill={rgb, 255:red, 0; green, 0; blue, 0 }  ][line width=0.08]  [draw opacity=0] (12,-3) -- (0,0) -- (12,3) -- cycle    ;
	\draw    (398.97,269.59) -- (399,272.27) ;
	\draw    (449.54,270.59) -- (449.71,273.04) ;
	\draw    (189.07,119.96) -- (183.99,119.92) ;
	\draw    (189.58,180.49) -- (179.33,180.49) ;
	\draw    (183.99,240.2) -- (189.12,240.2) ;
	\draw    (254.17,269.49) -- (254.17,273.21) ;
	\draw [color={rgb, 255:red, 208; green, 2; blue, 27 }  ,draw opacity=1 ][fill={rgb, 255:red, 255; green, 0; blue, 0 }  ,fill opacity=1 ] [dash pattern={on 4.5pt off 4.5pt}]  (190.52,89.63) -- (450.36,90.09) ;
	\draw    (190.05,209.92) -- (183.99,209.92) ;
	\draw    (190.52,89.63) -- (176.07,89.92) ;
	\draw    (190.52,150.2) -- (184.92,150.2) ;
	\draw [color={rgb, 255:red, 245; green, 166; blue, 35 }  ,draw opacity=1 ]   (222.3,210.14) -- (368.79,199.09) ;
	\draw [color={rgb, 255:red, 245; green, 166; blue, 35 }  ,draw opacity=1 ]   (368.79,199.09) -- (402.24,110.09) ;
	\draw [color={rgb, 255:red, 245; green, 166; blue, 35 }  ,draw opacity=1 ]   (402.24,110.09) -- (449.54,100.09) ;
	\draw [color={rgb, 255:red, 208; green, 2; blue, 27 }  ,draw opacity=1 ][fill={rgb, 255:red, 255; green, 0; blue, 0 }  ,fill opacity=1 ] [dash pattern={on 4.5pt off 4.5pt}]  (190.52,150.2) -- (386.74,149.59) ;
	\draw [color={rgb, 255:red, 208; green, 2; blue, 27 }  ,draw opacity=1 ][fill={rgb, 255:red, 255; green, 0; blue, 0 }  ,fill opacity=1 ] [dash pattern={on 4.5pt off 4.5pt}]  (189.58,180.49) -- (375.32,180.59) ;
	\draw [color={rgb, 255:red, 208; green, 2; blue, 27 }  ,draw opacity=1 ][fill={rgb, 255:red, 255; green, 0; blue, 0 }  ,fill opacity=1 ] [dash pattern={on 4.5pt off 4.5pt}]  (190.05,209.92) -- (222.3,210.14) ;
	\draw [color={rgb, 255:red, 208; green, 2; blue, 27 }  ,draw opacity=1 ][fill={rgb, 255:red, 255; green, 0; blue, 0 }  ,fill opacity=1 ] [dash pattern={on 4.5pt off 4.5pt}]  (189.12,240.2) -- (205.66,239.94) ;
	\draw [color={rgb, 255:red, 245; green, 166; blue, 35 }  ,draw opacity=1 ] [dash pattern={on 0.84pt off 2.51pt}]  (398.97,119.59) -- (398.97,269.59) ;
	\draw [color={rgb, 255:red, 245; green, 166; blue, 35 }  ,draw opacity=1 ] [dash pattern={on 0.84pt off 2.51pt}]  (386.74,149.59) -- (387.4,269.74) ;
	\draw [color={rgb, 255:red, 245; green, 166; blue, 35 }  ,draw opacity=1 ] [dash pattern={on 0.84pt off 2.51pt}]  (375.32,180.59) -- (375.32,270.09) ;
	\draw    (387.4,269.74) -- (387.46,272.57) ;
	\draw    (375.32,270.09) -- (375.32,272.64) ;
	\draw [color={rgb, 255:red, 245; green, 166; blue, 35 }  ,draw opacity=1 ] [dash pattern={on 0.84pt off 2.51pt}]  (205.66,239.94) -- (205.34,269.64) ;
	\draw  [color={rgb, 255:red, 245; green, 166; blue, 35 }  ,draw opacity=1 ] (201.31,284.51) .. controls (201.31,289.18) and (203.64,291.51) .. (208.31,291.51) -- (317.74,291.51) .. controls (324.41,291.51) and (327.74,293.84) .. (327.74,298.51) .. controls (327.74,293.84) and (331.07,291.51) .. (337.74,291.51)(334.74,291.51) -- (447.17,291.51) .. controls (451.84,291.51) and (454.17,289.18) .. (454.17,284.51) ;
	\draw  [color={rgb, 255:red, 208; green, 2; blue, 27 }  ,draw opacity=1 ] (173.58,90.18) .. controls (168.91,90.16) and (166.57,92.48) .. (166.56,97.15) -- (166.37,147.42) .. controls (166.35,154.09) and (164.01,157.41) .. (159.34,157.39) .. controls (164.01,157.41) and (166.33,160.75) .. (166.3,167.42)(166.31,164.42) -- (166.05,234.15) .. controls (166.04,238.82) and (168.36,241.16) .. (173.03,241.18) ;
	
	\draw (156.46,69.13) node [anchor=north west][inner sep=0.75pt]  [font=\footnotesize]  {$\tmax$};
	\draw (199.23,275.69) node [anchor=north west][inner sep=0.75pt]  [font=\tiny]  {$m_{1}$};
	\draw (150.6,32.92) node [anchor=north west][inner sep=0.75pt]    {$| \mathcal{C}_i |$};
	\draw (221.09,275.09) node [anchor=north west][inner sep=0.75pt]  [font=\tiny]  {$m_{2}$};
	\draw (361.39,275.29) node [anchor=north west][inner sep=0.75pt]  [font=\tiny]  {$m_{3}$};
	\draw (381.68,275.24) node [anchor=north west][inner sep=0.75pt]  [font=\tiny]  {$m_{4}$};
	\draw (400.54,275.69) node [anchor=north west][inner sep=0.75pt]  [font=\tiny]  {$m_{5}$};
	\draw (444.91,275.29) node [anchor=north west][inner sep=0.75pt]  [font=\tiny]  {$m_{6}$};
	\draw (499.08,280.92) node [anchor=north west][inner sep=0.75pt]    {$\textsf{oprs}$};
	\draw (306.47,300.68) node [anchor=north west][inner sep=0.75pt]   [align=left] {{\tiny \textcolor[rgb]{0.96,0.65,0.14}{milestones}}};
	\draw (122.45,148.01) node [anchor=north west][inner sep=0.75pt]  [font=\scriptsize,color={rgb, 255:red, 208; green, 2; blue, 27 }  ,opacity=1 ]  {$\lambda=6$};

\end{tikzpicture}
	\caption{The milestones are illustrated with an example in which $\milestones=6$.}\label{fig:milestone}
\end{figure}

\begin{algorithm}[ht]
	\SetAlgoLined
	\KwIn{The maximum target threshold $\tmax$, a sequence of $\opers$ operations of following form: either (1) addition of an item $w_j$, or (2) query $t_j$.}
	$\important, \mathcal{Q} \gets \emptyset$\;
	$\textsf{cnt}_j \gets 0 \quad\forall\;j\in[\tmax]$\;
	
	\For{$j \in [\opers]$}{
		\eIf{operation $j$ is addition of an item $w_j$}{
			$\textsf{cnt}_{w_j} \gets \textsf{cnt}_{w_j} + 1$\;
			\If{$\textsf{cnt}_{w_j} \leq \lfloor\tmax / w_j\rfloor$}{$\important \gets \important \cup \{j\}$\;}
		}{
			$\mathcal{Q} \gets \mathcal{Q} \cup \{ j \}$\;
		}
	}
	\Return $\important, \mathcal{Q}$\;
	\caption{$\textsf{FilterOperations}(\opers,\tmax)$}\label{alg:important}
\end{algorithm}

We divide the operations by $\milestones$ \textit{milestones} in our algorithm. Let the milestones be at operations $m_1, m_2, \ldots, m_{\lambda}$. The time steps between consecutive milestones may vary depending on how the solution changes. More precisely, the $i$'th milestone is at the smallest operation $j \in [\opers]$ such that $|\sums_j| \geq i \tmax/\milestones$ (see Figure~\ref{fig:milestone} for an example). It is straightforward to find the set of milestones $m_1, m_2, \ldots, m_\milestones$ by using the Bringmann's algorithm as a blackbox in a binary search. For finding $m_i$, we need to decide for an index $j$ whether $j < m_i$ or $j \geq m_i$ at each step of a binary search. To this end, we solve the subset sum problem from scratch for the first $j$ operations in $\widetilde{O}(j+\tmax)$ time and compare $|\sums_j|$ with $i \tmax/\milestones$. If we repeat this process $\milestones$ times for every $i \in [\milestones]$, we invoke the Bringmann's algorithm at most $O(\milestones\log{\opers})$ times, and therefore we spend a total of $\widetilde{O}(\milestones(\opers+\tmax))$ running time for finding the set of milestones. Although~\cite{bringmann2017near} only discusses a YES/NO variant of subset sum for a particular target value $t$, it is easy to see that the same algorithm can be used to solve the problem for all target values in range $[0,t]$. A formal discussion of the ideas is given in~\cite{bateni2018fast}.

We assume for now that $\opers \leq \tmax$ and later modify our algorithm to also work for $\opers > \tmax$. In our algorithm, before we process any of the operations, for each milestone $m_i$, we compute $\sums_{m_i}$ and $\sums_{m_i-1}$ from scratch. This takes time  $\tilde {O}(\opers+\tmax)$ for each milestone using Bringmann's algorithm~\cite{bringmann2017near}. After that, we make an array $A[0,\tmax] \rightarrow \{0,1\}$ such that $A_0 = 1$ and $A_i = 0 $ for all $1 \leq i \leq \tmax$ which represents the solution. We then process the operations one by one. If an operation adds an element we update $A$ and otherwise we report the solution by looking at $A$.

When we reach a milestone, say $m_i$, updating $A$ would be trivial; we just look at $C_{m_i}$ which is precomputed in the beginning and based on that we update $A$. For operations between two consecutive milestones, say $m_i$ and $m_{i+1}$, we know that $A$ only changes for indices in $\sums_{m_{i+1}-1} \setminus \sums_{m_{i}}$. Since we already computed $\sums_{m_i}$ and $\sums_{m_{i+1}-1}$,  we have explicit access to $\sums_{m_{i+1}-1} \setminus \sums_{m_i}$. Let $X_i = \sums_{m_{i}-1} \setminus \sums_{m_{i-1}}$. It follows from the definition that $|X_i| \leq \tmax/\milestones$.

In order to process operations between $m_i$ and $m_{i+1}-1$, we only consider set $X_i$. More precisely, after each operation that adds an item with weight $w$ to the problem, we only consider elements of $A$ that are included in $X_i$. For an integer $x \in X_i$, we set $A_x = 1$ if $A_{x-w}$ was equal to $1$ prior to this operation. This way, we can maintain $A$ after each operation by iterating over the elements of $X_i$ and thus spending $O(|X_i|) = O(\tmax/\milestones)$ time. By maintaining $A$ after each insertion operation, we can answer each query operation in time $O(1)$ by verifying if the value of $A$ is equal to $1$ for the given target. 

The overall time that our algorithm spends in order to process the operations is as follows: In the beginning, we first find the positions for the milestones. This takes a total time of $\tilde O(\lambda(\opers + \tmax))$. We then compute $\sums_i$ for $i \in \{m_1, m_1-1, m_2, m_2-1, \ldots, m_\lambda, m_{\lambda}-1\}$ each in time $\tilde O(\opers + \tmax)$. This amount to a total runtime of $\tilde O(\milestones (\opers + \tmax))$. From then on, each time we update $A$ for operations that are in the middle of two milestones, we only spend time $O(\tmax/\lambda)$. Also, each time we reach a milestone $m_i$, we update $A$ by using the already computed $\sums_{m_i}$ in time $O(\tmax)$. This also takes a total time of $O(\lambda \tmax)$. Thus, the total runtime of the algorithm is $\tilde O(\lambda(\opers + \tmax) + \opers \cdot\tmax /\lambda )$. Thus, per operation we spend time $\tilde O( \lambda \tmax / \opers + \tmax/\lambda)$. When $\opers \leq \tmax$, we set $\lambda = \sqrt{\opers}$ which gives us an amortized processing time of $\tilde O(\tmax / \sqrt{\opers})$.

If $\opers > \tmax$ we make a distinction between insertion operations and query operations. Notice that in our algorithm, the processing time of each query operation is $O(1)$. With a similar argument that we made in Section~\ref{sec:dynamic}, we can ignore some redundant insertion operations to make sure their count is bounded by $\tilde O(\tmax)$ without any change to the outcome of the algorithm. Therefore, we can improve the initialization time of our algorithm down to $\tilde O(\lambda \tmax)$ by only considering the insertion operations that are not ignored. Similarly, updating $A$ for the insertion operations between the milestones would take time $O(\tmax/\lambda)$. Moreover, answering each query operation takes time $O(1)$ and therefore the total time we spend on all operations combined is $\tilde O(\opers + \lambda \tmax + \tmax^2 / \lambda)$. By setting $\lambda = \sqrt{\tmax}$ we obtain amortized processing time $\tilde O(1 + \tmax^{1.5}/\opers)$.

\vspace{5mm}
\begin{theorem}\label{thm:offline}
	There exists a randomized algorithm for dynamic subset sum with offline access whose amortized processing time is bounded by  $\tilde O(1+\tmax \min{\{\sqrt{\opers},\sqrt{\tmax}\}}/\opers)$. The algorithm succeeds with high probability.
\end{theorem}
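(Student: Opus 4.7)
The plan is to follow the algorithm sketched in Section~\ref{sec:off} and verify the amortized bound in two regimes: $\opers \leq \tmax$ with $\lambda = \sqrt{\opers}$ milestones, and $\opers > \tmax$ with $\lambda = \sqrt{\tmax}$ milestones after trimming redundant insertions. First I would fix the notation $\sums_j$ for the subset sums realizable after the first $j$ operations, and call index $m_i$ the $i$'th milestone, defined as the smallest $j$ with $|\sums_j| \geq i\tmax/\lambda$. By monotonicity of $\sums_j$ and the pigeonhole principle applied to the total range $[0,\tmax]$, exactly $\lambda$ such milestones exist, and the difference set $X_i := \sums_{m_i-1}\setminus \sums_{m_{i-1}}$ satisfies $|X_i| \leq \tmax/\lambda$. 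I would compute the milestones by a binary search on $j$, using Bringmann's algorithm~\cite{bringmann2017near} (executed from scratch on the first $j$ insertions) as the comparison oracle; this costs $\tilde O(\lambda(\opers + \tmax))$ total. Then for each $i$ I would precompute $\sums_{m_i}$ and $\sums_{m_i-1}$ explicitly via Bringmann's algorithm, costing $\tilde O(\lambda(\opers+\tmax))$ as well.

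For the online phase I would maintain the $0/1$ indicator array $A$ of $\sums_j$ as $j$ sweeps through the operations. Query operations are answered in $O(1)$ by a lookup in $A$. At each milestone index $j = m_i$ (and $m_i-1$), I would overwrite $A$ with the precomputed set $\sums_j$, at cost $O(\tmax)$, which is charged $\lambda$ times in total. For every insertion operation $j$ strictly between milestones $m_{i-1}$ and $m_i$, the key correctness step is that the set of indices whose $A$-value flips from $0$ to $1$ is contained in $X_i$; indeed any such flipped index lies in $\sums_j \setminus \sums_{m_{i-1}} \subseteq \sums_{m_i-1} \setminus \sums_{m_{i-1}} = X_i$. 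So I would simply iterate $x$ over $X_i$ in \emph{decreasing} order and set $A_x \leftarrow 1$ whenever $A_{x-w_j} = 1$ and $A_x=0$; iterating in decreasing order is the standard trick that prevents the newly added item from being used twice. The per-insertion cost is $O(|X_i|) = O(\tmax/\lambda)$.

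Adding up, in the regime $\opers \leq \tmax$ the total running time is $\tilde O(\lambda(\opers+\tmax) + \opers \cdot \tmax/\lambda) = \tilde O(\lambda \tmax + \opers\tmax/\lambda)$, and setting $\lambda = \sqrt{\opers}$ yields an amortized cost of $\tilde O(\tmax/\sqrt{\opers})$ per operation. For $\opers > \tmax$, the main additional observation is that at most $\lfloor \tmax/w \rfloor$ copies of a weight $w$ are ever useful; Subroutine $\textsf{FilterOperations}$ (Algorithm~\ref{alg:important}) keeps only these, which leaves at most $O(\tmax \log \tmax)$ ``important'' insertions. Running the same scheme on this filtered stream of insertions (while still answering all $\opers$ queries in $O(1)$) gives total time $\tilde O(\opers + \lambda \tmax + \tmax^2/\lambda)$, and the choice $\lambda = \sqrt{\tmax}$ yields amortized cost $\tilde O(1 + \tmax^{1.5}/\opers)$. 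Combining the two regimes gives the claimed bound $\tilde O(1+\tmax\min\{\sqrt{\opers},\sqrt{\tmax}\}/\opers)$.

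The main obstacle I expect is the bookkeeping around milestones, specifically arguing that in the interval $(m_{i-1},m_i)$ the update really needs to visit only $X_i$ rather than all of $[0,\tmax]$, and that the bound $|X_i|\le \tmax/\lambda$ holds even though the definition of $m_i$ uses $\sums_{m_i}$ rather than $\sums_{m_i-1}$. This is why the algorithm precomputes both $\sums_{m_i}$ and $\sums_{m_i-1}$: the former ensures the density bound jumps by at most $\tmax/\lambda$ across a single operation (so $|\sums_{m_i-1}\setminus \sums_{m_{i-1}}| \le \tmax/\lambda$), and the latter, stored explicitly, gives $A$ a clean re-initialization just before the expensive milestone step. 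Randomness enters only through the invocations of Bringmann's algorithm, so the overall high-probability guarantee follows by a union bound over the $\tilde O(\lambda)$ invocations.
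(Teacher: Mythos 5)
Your proposal is correct and follows essentially the same route as the paper: milestones at density increments of $\tmax/\lambda$ located by binary search with Bringmann's algorithm as oracle, precomputation of $\sums_{m_i}$ and $\sums_{m_i-1}$, per-insertion updates restricted to $X_i = \sums_{m_i-1}\setminus\sums_{m_{i-1}}$ of size at most $\tmax/\lambda$, and the two regimes $\lambda=\sqrt{\opers}$ (for $\opers\le\tmax$) and $\lambda=\sqrt{\tmax}$ after filtering redundant insertions (for $\opers>\tmax$). The only small wrinkle is in your closing remark: the bound $|X_i|\le\tmax/\lambda$ comes directly from the milestone definition (as you correctly argue earlier), not from any claim that the density jumps by at most $\tmax/\lambda$ in a single operation, which need not hold and is precisely why the milestone operations themselves are handled by overwriting $A$ with the precomputed sets.
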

\section{Hardness of Fully Dynamic Subset Sum}\label{sec:hardness}
In this section, we aim to show a lower bound for the fully dynamic subset sum problem. Recall that in this problem, we also allow for deletion operations. We prove that there is no fully dynamic subset sum algorithm with amortized processing time $O(\tmax^{1-\epsilon})$ for any constant $\epsilon > 0$, unless  \textsf{SETH}  fails. Our result is mainly inspired by the  \textsf{SETH}-based hardness of~\cite{abboud2019seth} for subset sum. Our proof directly makes use of some of the observations made by~\cite{abboud2019seth} as blackbox. We  begin by explaining their method in Section~\ref{sec:abboud} and then proceed by showing how it gives a lower bound for the fully dynamic subset sum problem in Section~\ref{sec:hamed}. Our lower bound is formally stated in Theorem~\ref{thm:hardness}.

\begin{theorem}\label{thm:hardness}
	For any arbitrarily large constant $c > 0$ and any arbitrarily small constant $\epsilon > 0$, there is no fully dynamic algorithm for subset sum with amortized processing time $O(\tmax^{1-\epsilon})$ when the number of operations is $\Omega(\tmax^c)$ unless  \textsf{SETH}  fails.
\end{theorem}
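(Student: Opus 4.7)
The plan is to reduce $k$-$\mathsf{SAT}$ to a carefully constructed sequence of fully dynamic subset sum operations by leveraging the \textsf{SETH}-based hardness of Abboud, Bringmann, Hermelin and Shabtay~\cite{abboud2019seth}. The key fact I will borrow from their work as a black box is that, under \textsf{SETH}, one can convert any $k$-$\mathsf{SAT}$ instance on $n$ variables into a collection of $N$ \emph{independent} subset sum instances $\inst_1, \inst_2, \ldots, \inst_N$, each having only $O(\log \tmax)$ items and target $t_i \leq \tmax$, such that the original $\mathsf{SAT}$ instance is satisfiable if and only if some $\inst_i$ is a YES-instance, and such that any algorithm that answers all $N$ subset-sum queries requires total time $\Omega(N \cdot \tmax^{1-\epsilon})$ for every constant $\epsilon > 0$. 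By appropriately choosing $n$ relative to $\tmax$, the quantity $N$ can be tuned to $\Theta(\tmax^c)$ for any desired constant $c>0$.

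Next, I would turn each instance $\inst_i$ into a block of fully dynamic operations as sketched in the introduction. Concretely, let $\inst_i^{+}$ be the $O(\log \tmax)$ insertion operations that add the items of $\inst_i$ to the data structure, let $\inst_i^{\textsf{?}}$ be a single query asking whether the current multiset admits a subset summing to $t_i$, and let $\inst_i^{-}$ be the $O(\log \tmax)$ deletion operations that remove exactly those items. I then feed the data structure the sequence
\[
\inst_1^{+},\, \inst_1^{\textsf{?}},\, \inst_1^{-},\, \inst_2^{+},\, \inst_2^{\textsf{?}},\, \inst_2^{-},\, \ldots,\, \inst_N^{+},\, \inst_N^{\textsf{?}},\, \inst_N^{-}.
\]
Because each block $\inst_i^{-}$ returns the multiset to empty before the next $\inst_{i+1}^{+}$ begins, the answer of $\inst_i^{\textsf{?}}$ depends only on $\inst_i$, so the transcript of query answers solves all $N$ subset-sum instances simultaneously. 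The total number of operations is $\Theta(N \log \tmax) = \Theta(\tmax^c \log \tmax)$, which is $\Omega(\tmax^c)$ as required.

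Finally, suppose toward contradiction that a fully dynamic subset sum algorithm processes each of these operations in amortized time $O(\tmax^{1-\epsilon})$. Then the total time spent on the whole sequence is $O(N \log \tmax \cdot \tmax^{1-\epsilon}) = O(N \cdot \tmax^{1-\epsilon/2})$ for sufficiently large $\tmax$, which contradicts the lower bound $\Omega(N \cdot \tmax^{1-\epsilon/2})$ inherited from~\cite{abboud2019seth} for the joint task of solving the $N$ instances, and hence contradicts \textsf{SETH}. The main obstacle, and the reason I invoke~\cite{abboud2019seth} as a black box, is establishing the \emph{batch} form of their hardness: a single-instance lower bound would not suffice because the dynamic algorithm is allowed to spend very different amounts of time on different blocks. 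What one needs is that even in total, solving all $N$ subset-sum instances requires $\Omega(N \cdot \tmax^{1-\epsilon})$ time, which is exactly what the reduction of~\cite{abboud2019seth} from $k$-$\mathsf{SAT}$ gives (by enumerating one designated block of assignment bits outside the reduction). Modulo this batch-hardness statement, the rest of the argument is essentially the routine observation that insert--query--delete blocks isolate the instances from each other.
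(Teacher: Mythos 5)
Your proposal is correct and follows essentially the same route as the paper: the identical insert--query--delete block sequence $\inst_1^{+},\inst_1^{\textsf{?}},\inst_1^{-},\ldots,\inst_N^{+},\inst_N^{\textsf{?}},\inst_N^{-}$ built from $N=\Theta(\tmax^c)$ instances of the reduction of~\cite{abboud2019seth}, with the contradiction obtained by comparing the $\tilde O(N\tmax^{1-\epsilon})$ total running time of the hypothetical fully dynamic algorithm against \textsf{SETH}. The only difference is modularization: you invoke the batch (``Direct-OR'') hardness of answering all $N$ instances as a black box, whereas the paper derives it explicitly from Lemma~\ref{lemma:abud} by enumerating assignments to a block of $m'-m$ variables of a $k$-SAT instance and working out the parameters $m$, $m'$, $\epsilon$ --- precisely the enumeration step you mention in passing at the end.
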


In our proof, we build a collection of $N = \Theta(\tmax^c)$ instances of the subset sum problem $\inst_1, \inst_2, \ldots, \inst_N$ each with $O(\log \tmax)$ items and target $t_i$ which is bounded by $\tmax$. 
We denote by $\inst_i^+$ a set of operations that adds the items of $\inst_i$ to the dynamic problem, and by $\inst_i^-$ a set of operations that removes the items of $\inst_i$ from the dynamic problem. In addition, let $\inst_i^\textsf{?}$ be a query operation that asks if we can build a sum of $t_i$ using the elements of the dynamic problem at the time of the query operation. Then we consider the following sequence of operations for the dynamic subset sum problem which has a combination of insertion and deletion, and query operations:

\[ \inst_1^+, \inst_1^\textsf{?}, \inst_1^-, \inst_2^+, \inst_2^\textsf{?}, \inst_2^-, \ldots, \inst_N^+, \inst_N^\textsf{?}, \inst_N^- \] 

Roughly speaking, we show that under the  \textsf{SETH}  hypothesis, there is no algorithm that is able to find out whether the result of any of $\inst_i$'s is Yes in time $O(N \tmax^{1-\epsilon})$  for any constant $\epsilon > 0$. 
However if we feed this sequence into a fully dynamic subset sum algorithm with amortized processing time $O(\tmax^{1-\epsilon})$, it will solve this problem in time $\tilde O(N \tmax^{1-\epsilon})$ which contradicts \textsf{SETH}.

\subsection{The Reduction of~\cite{abboud2019seth}}\label{sec:abboud}

\cite{abboud2019seth} shows a clever reduction from $k$-SAT to subset sum. Let $\phi$ be an instance of $k$-SAT on $m$ variables, a CNF formula in which every clause has at most $k$ variables. It is well-known that finding an assignment of true or false to these $m$ variables that satisfies $\phi$ is NP-hard if $k \geq 3$ and it can be solved in $O_m(2^m)$\footnote{$O_m$ notation hides polynomials in terms of $m$.} time by verifying every possible assignment. The Strong Exponential Time Hypothesis~\cite{calabro2009complexity,impagliazzo2001complexity}, or  \textsf{SETH}  for short, states that for any constant $\epsilon > 0$, there is a constant $k \geq 3$ so that $k$-SAT is not solvable in time $O(2^{(1-\epsilon)m})$. In their main theorem, they use the reduction  from $k$-SAT to subset sum to show that the existence of an algorithm that solves subset sum in time $t^{1-\epsilon}2^{o(n)}$ refutes  \textsf{SETH}, as it can also solve $k$-SAT in time $O(2^{(1-\epsilon/5)m})$.

Their approach utilizes several techniques  including a sparsification lemma of~\cite{impagliazzo2001problems} and $\lambda$-average-free sets of Behrend~\cite{behrend1946sets} to achieve the following reduction which we use as a black-box in this paper.

\begin{lemma}[\cite{abboud2019seth}]\label{lemma:abud}
	\label{lemma:reduction}
	For any constant $\epsilon > 0$ and any constant $k \geq 3$,  a $k$-SAT formulation $\phi$ on $m$ variables can be reduced to $l \leq 2^{\epsilon m}$ instances of subset sum $\inst_1, \inst_2, \ldots, \inst_l$, in a way that $\phi$ is satisfiable if and only if at least one of $\inst_i$'s returns Yes. In each $\inst_i$, target $t_i$ is bounded by $2^{(1+2\epsilon)m}$ and the number of items is at most $f(k,\epsilon) m$, where $f$ is a function that only depends on $k$ and $\epsilon$.
\end{lemma}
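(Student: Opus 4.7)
My plan is to follow the two-stage blueprint naturally suggested by the two ingredients cited in the statement. The first stage handles the dependence on the number of clauses of $\phi$ and the second stage handles the size of the numbers produced by the reduction. In the first stage I would invoke the sparsification lemma of Impagliazzo--Paturi--Zane~\cite{impagliazzo2001problems}: for every $\epsilon'>0$ and every constant $k$, a $k$-CNF $\phi$ on $m$ variables can be written as a disjunction $\phi=\phi_1\vee\cdots\vee\phi_l$ of at most $l\le 2^{\epsilon' m}$ sparse $k$-CNFs, where each $\phi_j$ has at most $C(k,\epsilon')\,m$ clauses. It suffices to build one subset sum instance $\inst_j$ per $\phi_j$ because $\phi$ is satisfiable iff some $\phi_j$ is. I would set $\epsilon'=\epsilon$, which already gives the required bound $l\le 2^{\epsilon m}$.

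In the second stage I would produce, for each sparse $\phi_j$ with $m$ variables and $M=C(k,\epsilon)\,m$ clauses, a subset sum instance $\inst_j$ with at most $2m+O(M)=f(k,\epsilon)\,m$ items. The textbook SAT-to-subset-sum reduction places, for every variable $x_i$, two items encoding the literals $x_i$ and $\bar x_i$, and adds slack items for each clause; the positions of the ``clause digits'' register how many of the selected literals satisfy each clause. Writing numbers in a fixed base $B$ gives a target of order $B^{M}$, which is $2^{\Theta(m\log m)}$ when $B$ is polynomial, far larger than what the lemma allows. This is exactly the place where Behrend's construction enters: using a $\lambda$-average-free subset $S\subseteq[N]$ of size $|S|\ge N/2^{O(\sqrt{\log N})}$ (with $\lambda$ chosen as a function of $k$ so that the number of literals per clause cannot accidentally average out), I would encode clause-digits by elements of $S$ rather than by $\{0,1,\ldots,k\}$, and pack them into one integer of length approximately $M\log(\lambda/|S|)$. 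Tuning $N\approx 2^{m/M}$, the clause-block of the target contributes $\approx 2^{m+o(m)}$, and the variable-identification block contributes the standard $2^m$ factor, giving a total target of size $2^{(1+o(1))m}\le 2^{(1+2\epsilon)m}$ for large $m$. The number of items remains $2m+O(M)=O_{k,\epsilon}(m)$.

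Correctness of the per-instance reduction reduces to two verifications. \textbf{Completeness}: a satisfying assignment of $\phi_j$ selects, literal-by-literal and together with appropriate clause-slack items drawn from $S$, a subset whose total is the intended target $t_j$. \textbf{Soundness}: any subset summing to $t_j$ determines a consistent assignment, because the average-freeness of $S$ prevents the clause digits from ``carrying'' in unintended ways across blocks — in a non-average-free encoding two different combinations of literals could sum to the same clause-digit value, but with a $\lambda$-average-free set this collapse is combinatorially impossible as long as fewer than $\lambda$ literals of any clause can contribute. I would then check that taking $\phi$ satisfiable iff some $\inst_j$ answers YES follows from the disjunction in the sparsification step together with the per-instance soundness and completeness.

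The main obstacle I expect is the second paragraph: matching the target bound $2^{(1+2\epsilon)m}$ exactly while keeping the construction local and avoiding cross-clause interference. This requires a careful quantitative balance between the base $B$ of the encoding, the density of Behrend's set (which costs a subexponential $2^{O(\sqrt{\log N})}$ factor), and the sparsification parameter $\epsilon'$. The routine but fiddly calculations are (i) choosing $\lambda$ large enough relative to $k$ that no legal clause-count is a nontrivial average in $S$, and (ii) absorbing the $2^{O(\sqrt{\log N})}$ slack from Behrend's lower bound into the $+2\epsilon$ budget of the exponent. Everything else — the counts $l\le 2^{\epsilon m}$ and $|\inst_j|\le f(k,\epsilon)\,m$ — follows immediately from the sparsification step and from the fact that each variable and each clause contributes $O(1)$ items to $\inst_j$.
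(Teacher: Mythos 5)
The paper does not prove Lemma~\ref{lemma:abud} at all: it is imported as a black box from~\cite{abboud2019seth}, and the only commentary in the text is the single sentence naming the two ingredients you also name (the Impagliazzo--Paturi--Zane sparsification lemma and Behrend's $\lambda$-average-free sets). So there is no in-paper proof to compare against; what you are being judged on is whether your sketch is a faithful reconstruction of the construction of~\cite{abboud2019seth}, and here there is a genuine gap.

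Your second paragraph proposes to start from the textbook CNF$\to$Subset Sum reduction (two items per variable, slack items per clause, a digit per variable and a digit per clause in some base $B\ge k+2$) and then compress only the clause-digit block using a Behrend set. This cannot work: the \emph{variable-identification} block is the bottleneck, not the clause block. With $m$ variables encoded as $m$ base-$B$ digits to prevent carries, that block alone already contributes $B^{m}\ge (k+2)^{m}\gg 2^{(1+2\epsilon)m}$, independently of how cleverly you encode the clauses. The claim in your sketch that ``the variable-identification block contributes the standard $2^{m}$ factor'' silently assumes base $2$, which is exactly what the no-carry argument forbids. The actual construction in~\cite{abboud2019seth} does not patch the textbook reduction; it changes the structure of the reduction itself. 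After sparsification they partition the variables into groups and produce one item per (group, partial assignment) pair, so that a satisfying assignment corresponds to picking exactly one item per group, and the role of the $\lambda$-average-free set is to rule out unintended multi-item combinations \emph{within} a group from hitting the target, i.e.\ to simulate the ``choose one representative per group'' constraint cheaply rather than to compress clause counters. This restructuring is precisely what brings the target down to $2^{(1+o(1))m}$ while keeping the item count linear in $m$. Your first paragraph (sparsification, and the resulting $l\le 2^{\epsilon m}$ bound) is correct and does match the first stage of~\cite{abboud2019seth}; your correctness argument in the third paragraph (completeness/soundness via average-freeness) captures the right intuition for why average-free sets are useful, but it is argued about the wrong reduction. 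You also, rightly, flag the second paragraph as the weak point; the issue is not that its calculations are ``routine but fiddly'' but that the route itself does not close the exponent to $1+2\epsilon$ because of the variable block.
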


Lemma~\ref{lemma:abud} directly gives a lower bound to subset sum in the following way. Assume for the sake of contradiction that subset sum can be solved in time $t^{1-\epsilon'}2^{o(n)}$ and let $\epsilon = \epsilon'/4$. We can then use Lemma~\ref{lemma:abud} to solve each of the $l$ instances of subset sum in time 
\begin{equation*}
\begin{split}
O(\big(2^{(1+2\epsilon)m} \big)^{1-\epsilon'}2^{o(f(k,\epsilon) m)}) &= O(\big(2^{(1+2\epsilon)m} \big)^{1-\epsilon'}2^{o(m)})\\
& = O(2^{(1+2\epsilon)(1-\epsilon')m}2^{o(m)})\\
& = O(2^{(1+2\epsilon)(1-4\epsilon)m}2^{o(m)})\\
& = O(2^{(1-2\epsilon)m})\\
\end{split}
\end{equation*}
which amounts to a total runtime of $O(2^{(1-\epsilon)m})$ for all $l \leq 2^{\epsilon m}$  instances. This contradicts \textsf{SETH} for sufficiently large $k$.

\subsection{A Lower Bound for Amortized Processing Time of Fully Dynamic Subset Sum}\label{sec:hamed}

In this section we prove Theorem~\ref{thm:hardness}.

\begin{proof}[Of Theorem~\ref{thm:hardness}]
As mentioned earlier, we set $N = \tmax^c$ and seek to find $N$ instances of subset sum $\inst_1, \inst_2, \ldots, \inst_N$ in a way that the solution to some $k$-SAT problem is Yes if and only if the solution for one of the subset sum problems $\inst_1, \inst_2, \ldots, \inst_N$ is Yes.

To this end, we assume for the sake of contradiction that for some $\epsilon'$, there exists a fully dynamic subset sum algorithm with amortized processing time $t^{1-\epsilon'}$. Let $\epsilon = \epsilon'/3$ and set $m$  in a way that $2^{(1+2\epsilon) m} = \tmax$. That is, $m = \frac{\log \tmax}{1+2\epsilon}$. We assume that $N \geq 2^{\epsilon m}$ otherwise the proof is similar to the lower bound for subset sum. We also set $m'$ in a way that $2^{m'-m} = N/2^{\epsilon m}$. Therefore, $m' = m + \log N-\epsilon m$.

We argue that any $k$-SAT instance on $m'$ variables can be reduced to at most $N$ instances of subset sum. To this end, we divide the variables into two sets. The first set contains the first $m$ variables and the second set contains the last $m'-m$ variables. We try all the $2^{m'-m}$ different assignments for the second set of variables, and for each one we make a separate $k$-SAT instance with $m$ variables which is subject to the first $m$ variables and the given assignment to the second set of variables. Obviously, at least one of the $2^{m'-m}$ instances of $k$-SAT is solvable if and only if we can solve the original $k$-SAT problem on $m'$ variables. We then reduce each of the $k$-SAT instance on $m$ variables into $l \leq 2^{\epsilon m}$ instances of subset sum each with target value $t_i$ bounded by $2^{(1+2\epsilon) m}$ and at most $f(k,\epsilon)m$ items.

Let the subset sum instances be $\inst_1, \inst_2, \ldots, \inst_N$ and the target for each $\inst_i$ be $t_i$. If the solution to any of the subset sum instances is Yes, the original $k$-SAT problem admits a solution.  We denote by $\inst_i^+$ a set of operations that adds the items of $\inst_i$ to the dynamic problem and by $\inst_i^-$ a set of operations that removes the items of $\inst_i$ from the dynamic problem. In addition, let $\inst_i^\textsf{?}$ be a query operation that asks if we can build a sum of $t_i$ using the elements of the dynamic problem at the time of the query operation. Then we consider the following sequence of operations for the dynamic subset sum problem which has a combination of insertion and deletion, and query operations:

\[ \inst_1^+, \inst_1^\textsf{?}, \inst_1^-, \inst_2^+, \inst_2^\textsf{?}, \inst_2^-, \ldots, \inst_N^+, \inst_N^\textsf{?}, \inst_N^- \] 

Since $t_i$ for each instance is bounded by $\tmax = 2^{(1+2\epsilon) m}$ and each instance has at most $f(\epsilon, k)m = O(\log \tmax)$ elements, using the fully dynamic subset sum algorithm, we can find out if the solution to any of the subset sum instance is Yes in time $\tilde O(N\tmax^{1-\epsilon'}) = \tilde O(\tmax^{c+1-\epsilon'})$. This implies that the original subset sum problem on $m'$ variables can be solved in time $\tilde O(\tmax^{c+1-\epsilon'}) = \tilde O(2^{\log \tmax(c+1-\epsilon')})$. Since 
\begin{equation*}
\begin{split}
m'  &= m + \log N-\epsilon m\\
& = m(1-\epsilon) + c \log \tmax \\
&= (1-\epsilon)\frac{\log \tmax}{1+2\epsilon}+ c \log \tmax \\
&= \log \tmax (c+\frac{1-\epsilon}{1+2\epsilon}) \\
&\geq \log \tmax (c+1-3\epsilon) \\
&= \log \tmax (c+1-\epsilon')
\end{split}
\end{equation*}
this contradicts \textsf{SETH} for large enough $k$.
\end{proof}


\bibliographystyle{abbrv}	
\bibliography{src/references}
\appendix
\newpage
\section{Fully Dynamic  Bounded $3$-Sum and Beyond}\label{sec:3sum}
In the fully dynamic bounded $3$-sum problem, we have three sets $\Aa$, $\Bb$, and $\Cc$ which are initially empty. These sets will contain numbers with values in range $[0, \rmax]$, where $\rmax$ is an upper bound explicitly given to the algorithm in advance. At each point in time, one of the following operations arrives: 
\begin{itemize}
	\item Insert a number with value $w \in [0,\rmax]$ into $\Aa$, $\Bb$, or $\Cc$.
	\item Delete an existing number with value $w \in [0,\rmax]$ from $\Aa$, $\Bb$, or $\Cc$.
	\item Report whether three numbers $(a,b,c) \in \Aa \times \Bb \times \Cc$ exist so that $a + b = c$.
\end{itemize}
We assume for simplicity that a number which already exists in a set will never be added to that set again and that each deletion operation is valid in the sense that the number to be deleted is previously added to the corresponding set. We give a fully dynamic algorithm for  bounded $3$-sum with amortized processing time 
$\tilde O(\min{\{\opers, \rmax^{0.5}\}})$ in Theorem~\ref{thm:3sum}.
We also extend our solution to the  bounded $k$-sum problem, in which there are $k$ sets $\Aa_1, \Aa_2, \ldots, \Aa_k$ and we wish to determine if there exists a tuple $(a_1, a_2, \ldots, a_k)$ such that $a_1+a_2+\ldots+a_{k-1}=a_k$ and $a_i \in \Aa_i$. Next, in Lemma~\ref{sum-reduction}, we show a reduction from any instance of bounded $k$-sum with range $\rmax$ to an instance of subset sum such that $\tmax = O_k(\rmax)$. 

\begin{algorithm}[ht]
	\SetAlgoLined
	\KwIn{The maximum weight threshold $\rmax$, a sequence of $\opers$ operations of either (1) addition of an item $w_j$ to $\{\Aa,\Bb,\Cc\}$, (2) deletion of an item $w_j$ from $\{\Aa,\Bb,\Cc\}$, or (3) query whether there exists triple $(a, b, c) \in \Aa\times\Bb\times\Cc$ so that $a+b=c$.}
	$\textsf{cnt}\gets 0$\;
	$R \gets \emptyset$\;
	$\Aa, \Bb, -\Cc, (\Aa+\Bb), (\Aa-\Cc), (\Bb-\Cc)\gets \emptyset$\;
	
	\For{$j \in [\opers]$}{
		\If{operation $j$ is a query}{
			Answer $\textsf{Yes}$ if $\textsf{cnt} > 0$ and $\textsf{No}$ otherwise\;
			\Continue\;
		}
		\If{$j \geq i\cdot\sqrt{\rmax}$}{
			Apply the new modifications in pile $R$ to $\Aa$, $\Bb$, and $\Cc$\;
			Recompute $(\Aa+\Bb)$, $(\Aa-\Cc)$, and $(\Bb-\Cc)$\;
			$R \gets \emptyset$\;
		}
		\If{operation $j$ is an addition to $\mathsf{X}$}{
				$R \gets R \cup (+,\Xx,w_j)$\;
				$\textsf{cnt} \gets \textsf{cnt} + \textsf{CountTriples}(w_j, \Yy+\Zz, \Yy, \Zz,  R);\quad$ \tcp{ $\{ \Yy, \Zz \} \gets \{\Aa, \Bb, -\Cc\} \setminus \{ \Xx \}$}
		}
		\If{operation $j$ is a deletion from $\mathsf{X}$}{ 
				$R \gets R \cup (-,\Xx,w_j)$\;
				$\textsf{cnt} \gets \textsf{cnt} -\textsf{CountTriples}(w_j, \Yy+\Zz, \Yy, \Zz, R)$\; 
		}
	}		
	\caption{Fully Dynamic Bounded $3$-Sum\\ $O(\min{\{\opers, \rmax^{0.5} \log \rmax\}})$ amortized processing time}\label{alg:3sum}
\end{algorithm}

We start by presenting a fully dynamic algorithm for bounded $3$-sum with sublinear amortized processing time. 
In our algorithm, we keep an auxiliary dataset which we refresh after every $\sqrt{\rmax}$ operations. We call such an event a \textit{milestone}. We also maintain a list $R$ which is a pile of operations which have come after the most recent computation of the auxiliary dataset. Therefore the size of $R$ never exceeds $\sqrt{\rmax}$. 
We also maintain a number $\textsf{cnt}$ throughout the algorithm such that at every point in time, $\textsf{cnt}$ counts the number of tuples $(a,b,c)$ such that $a \in \Aa, b \in \Bb, c \in \Cc$ and $a+b = c$. The ultimate goal is to update the value of $\textsf{cnt}$ after each operation so that we can answer query operations based on whether $\mathsf{cnt} > 0$ or not.
In the beginning, and also after every $\sqrt{\rmax}$ operations, we compute the auxiliary dataset. More precisely, we compute 
\[\big\{\Aa, \Bb, -\Cc, \Aa+\Bb, \Aa-\Cc, \Bb - \Cc\big\}. \]
where $\Aa+\Bb$ (or $\Aa-\Bb$) is a set containing $a+b$ (or $a-b$) for every pair of elements $(a,b) \in \Aa \times\Bb$. Similarly, $-\Cc$ contains the negation of each number in $\Cc$.

Note that we treat each of the above data as a \textbf{non binary} vector over range $[-2\rmax, 2\rmax]$ where every element specifies in how many ways a specific number can be made. For instance $\Aa + \Bb$ is a vector over range $[-2\rmax,2\rmax]$ and index $i$ counts the number of pairs $(a,b)$ such that $a \in \Aa$ and $b \in \Bb$ and $a+b = i$. It follows that each of the vectors can be computed in time $O(\rmax \log \rmax)$ using polynomial multiplication. 
Each time we compute the auxiliary dataset, we also start over with a new pile $R$ with no operations in it. Moreover, after the computation of the auxiliary dataset, we also compute $\textsf{cnt}$ using polynomial multiplication in time $O(\rmax \log \rmax)$. The only exception is the first time we make the auxiliary dataset in which case we know $\mathsf{cnt}$ is equal to 0 and we do not spend any time on computing it.

The auxiliary dataset remains intact until we recompute it from scratch but as new operations are added into $R$, we need to update $\textsf{cnt}$. To update $\textsf{cnt}$ after an insertion operation or a deletion operation, we need to compute the number of triples that affect $\textsf{cnt}$ and contain the newly added (or deleted) number. To explain the idea, let us assume that a number $w$ is added into $\Aa$. There are four types of triples $(w,b,c)$ that can potentially affect $\textsf{cnt}$: (Keep in mind that $b$ and $c$ may refer to some numbers that previously existed in the sets but were removed at some point.)

\begin{algorithm}[ht]
	\SetAlgoLined
	\KwIn{The weight $w_j$ of the item to be added/deleted, a set $\Yy+\Zz$ containing $y+z$ for every pair $(y,z) \in \Yy\times\Zz$, and a pile of new modifications $R$.}
	
		$\textsf{triples} \gets (\Yy + \Zz)_{(-w_j)} $\;
		\For{every operation $(+,\Yy, y)$ in $R$}{
			$\textsf{triples} \gets \textsf{triples} + \Zz_{(-w_j-y)} - \big|\{(-,\Zz,-w_j-y) \in R\}\big| + \big|\{(+,\Zz,-w_j-y) \in R\}\big|$\;
		}		
		\For{every operation $(-,\Yy, y)$ in $R$}{
			$\textsf{triples} \gets \textsf{triples} - \Zz_{(-w_j-y)} + \big|\{(-,\Zz,-w_j-y) \in R\}\big| $\;
		}
		\For{every operation $(+,\Zz, z)$ in $R$}{
			$\textsf{triples} \gets \textsf{triples} + \Yy_{(-w_j-z)} - \big|\{(-,\Yy,-w_j-z) \in R\}\big|$\;
		}
		\For{every operation $(-,\Zz, z)$ in $R$}{
			$\textsf{triples} \gets \textsf{triples} - \Yy_{(-w_j-z)} $\;
		}
		$\textsf{return } \textsf{ triples}$\;
	\caption{$\textsf{CountTriples}(w_j, \Yy+\Zz, \Yy, \Zz, R)$ \\ Counting the number of triples $(w_j,y,z) \in \{w_j\}\times\Yy\times\Zz$ so that $w_j + y + z = 0$.}\label{alg:triples}
\end{algorithm}

\begin{itemize}
	\item Both $b$ and $c$ have arrived before the last milestone and thus they are incorporated in it. The number of such pairs is $\big(\Bb-\Cc\big)_{-w}$.
	\item $b$ is added or deleted after the last milestone, but $c$ has come before the last milestone. In this case, we iterate over all new modifications to $\Bb$, which their count is bounded by $|R|$. For each of them we look up $(-\Cc)_{-w-b}$ to verify if there is a triple containing $w$ and $b$. 
	\item In this case, $b$ refers to an operation prior to the last milestone and $c$ refers to an operation after the last milestone. This case is similar to the previous case and we just iterate over new modifications to $\Cc$.
	\item In this case, both $b$ and $c$ refer to operations after the last milestone. The number of such pairs can also be computed in time $O(|R|)$ since by fixing $b$, we just search to verify if $c = w+b$ exists in the new set of operations.
\end{itemize}

Depending on whether or not an operation adds a number or removes a number, we may incorporate the number of tuples into $\textsf{cnt}$ positively or negatively. However, since all the above counts are available in time $O(|R|)$, we can update $\mathsf{cnt}$ in time $O(|R|)$.

\begin{theorem}
	There is an algorithm for fully dynamic bounded $3$-sum whose amortized processing time is bounded by  $O(\min{\{\opers, \rmax^{0.5} \log \rmax\}})$. 
	\label{thm:3sum}
\end{theorem}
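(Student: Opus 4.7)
The plan is to verify two claims about Algorithm~\ref{alg:3sum}: (i) at every query time, $\textsf{cnt}$ equals the number of current triples $(a,b,c) \in \Aa \times \Bb \times \Cc$ with $a+b = c$, and (ii) the amortized runtime matches the stated bound. Once (i) is established, correctness of the query answers follows trivially, since the reported bit is simply $[\textsf{cnt} > 0]$. I would organize the argument by induction on the operation index, with milestone rebuilds serving as base cases: at a rebuild, we can check via a single evaluation of the convolution $(\Aa+\Bb) * (-\Cc)$ at the origin that $\textsf{cnt}$ is correctly initialized, and this fits within the $O(\rmax \log \rmax)$ milestone budget.

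For the inductive step, the main task is to verify that $\textsf{CountTriples}(w_j, \Yy+\Zz, \Yy, \Zz, R)$ returns exactly the number of pairs $(y,z) \in \Yy_{\text{cur}} \times \Zz_{\text{cur}}$ with $y+z = -w_j$, where $\Yy_{\text{cur}}, \Zz_{\text{cur}}$ denote the current sets obtained by applying the pile $R$ to the base sets recorded at the last milestone. I would partition the set of such pairs into four subcases according to whether $y$ lies in the base $\Yy$ or is the subject of a modification in $R$, and analogously for $z$. The precomputed vector $\Yy+\Zz$ evaluated at index $-w_j$ accounts for pairs in which both coordinates are present in both the base and the current set; the indexed lookups into $\Yy$ and $\Zz$ inside the four loops handle the mixed cases; and the enumerations over $R$ handle the case where both coordinates are modified in $R$. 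The $\pm|\{(\pm,\Zz,\cdot)\in R\}|$ correction terms are there precisely to patch double-counting or omissions that arise when a base element has also been touched by an operation in $R$, which is the most delicate part of the verification.

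For the runtime I would split the cost into three components: milestone rebuilds, $\textsf{CountTriples}$ calls, and queries. Each milestone rebuild computes $(\Aa+\Bb)$, $(\Aa-\Cc)$, $(\Bb-\Cc)$ via polynomial multiplication in time $O(\rmax \log \rmax)$, and rebuilds occur every $\sqrt{\rmax}$ operations, giving an amortized cost of $O(\sqrt{\rmax}\log \rmax)$ per operation. Each $\textsf{CountTriples}$ call iterates over $R$ with $O(1)$ work per iteration, since every precomputed vector is indexed in constant time and $|R| \leq \sqrt{\rmax}$, giving cost $O(\sqrt{\rmax})$ per operation. Each query costs $O(1)$. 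For the regime $\opers < \sqrt{\rmax}$, one can instead maintain $\textsf{cnt}$ naively by scanning, on each modification, the current state of the two other sets in time $O(\opers)$, yielding an $O(\opers)$ per-operation bound. Taking the minimum of the two regimes yields the claimed amortized bound $\tilde O(\min\{\opers, \rmax^{0.5}\})$.

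The main obstacle will be the case analysis for correctness of $\textsf{CountTriples}$, and in particular the bookkeeping needed when the same value appears both in the base set and as the subject of one or more operations in $R$; one must convince oneself that the signs and correction terms in Algorithm~\ref{alg:triples} are exactly right for all combinations of insertion and deletion events on $\Yy$ and $\Zz$. The runtime analysis, by comparison, is essentially a direct read-off from the loop structure once the standard $O(\rmax \log \rmax)$ cost of FFT-based polynomial multiplication is invoked.
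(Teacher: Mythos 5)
Your proposal follows essentially the same route as the paper: the same milestone-based algorithm (the auxiliary dataset $\{\Aa,\Bb,-\Cc,\Aa+\Bb,\Aa-\Cc,\Bb-\Cc\}$ recomputed by polynomial multiplication every $\sqrt{\rmax}$ operations), the same invariant that $\textsf{cnt}$ equals the current number of triples, the same four-case justification of $\textsf{CountTriples}$ over the pile $R$, and the same three-way accounting of the cost (rebuilds amortized to $O(\sqrt{\rmax}\log\rmax)$, $O(|R|)$ per modification, $O(1)$ per query). The paper's own proof of Theorem~\ref{thm:3sum} is exactly this two-case runtime computation, with the correctness of the correction terms in $\textsf{CountTriples}$ left at the same informal level as in your plan, so on the main line there is nothing to object to.

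The one place you deviate is the regime $\opers < \sqrt{\rmax}$, and there your argument as written has a gap: you propose to run a \emph{different}, naive algorithm in that regime and then ``take the minimum of the two regimes,'' but $\opers$ is not known to the algorithm in advance (the paper states this explicitly), so a single dynamic algorithm cannot select its strategy according to which regime it is in. The paper instead analyzes the \emph{same} milestone algorithm in both regimes: when $\opers < \sqrt{\rmax}$, the only auxiliary-dataset computation is the initial one, performed on empty sets in $O(1)$ time, and each subsequent operation costs $O(|R|) \le O(\opers)$, so the single algorithm already achieves the bound $O(\min\{\opers,\sqrt{\rmax}\log\rmax\})$ without any case distinction in the algorithm itself. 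Your own analysis contains this observation implicitly (the per-operation work is $O(|R|)$ and $|R| \le \opers$), so the fix is immediate; alternatively, an adaptive switch from the naive scheme to the milestone scheme at the $\sqrt{\rmax}$-th operation would also repair the argument, and is in effect what the paper's algorithm does anyway.
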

\begin{proof}
We consider two cases separately. If $\opers < \sqrt{\rmax}$, then the only time we make the auxiliary dataset is in the beginning of the algorithm and since there are no operations at that point, this only takes time $O(1)$. From then on, every operation takes time $O(|R|)$ to be processed and since $|R| \leq \opers$, then the processing time would be bounded by $O(\sqrt{\rmax})$.

If $\opers \geq \sqrt{\rmax}$ then we compute the auxiliary dataset $O(\opers / \sqrt{\rmax})$ times and since each computation takes time $O(\rmax \log \rmax)$, the overall runtime would be bounded by $O(\opers \sqrt{\rmax} \log \rmax)$. Moreover, processing each operation takes time $O(|R|) = O(\sqrt{\rmax})$. Thus, per operation the processing time is bounded by $O(\sqrt{\rmax} \log \rmax)$.
\end{proof}

\subsection{Extending the Algorithm to Bounded $k$-sum}
The fully dynamic bounded $k$-sum problem is a generalization of fully dynamic bounded $3$-sum in which instead of three sets we have $k$ sets denoted by $\mathcal{A} = \{\Aa_1, \Aa_2, \ldots, -\Aa_k\}$. Note that we use negative values for the numbers in $\Aa_k$ since in the fully dynamic bounded $k$-sum problem we aim to find a tuple $(a_1, a_2, \ldots, a_k) \in \prod_{i\in[k]}{\Aa_i}$  such that $a_1+\ldots+a_{k-1}-a_k = 0$\footnote{Recall that we defined an analogous  condition for triples in bounded $3$-sum using a slightly different format $a+b=c$, which is equivalent to $a+b-c=0$. Thus, negating the value of each number in $\Aa_k$ resolves the asymmetries between $\Aa_k$ and other $\Aa_i$'s.}. We denote by $\mathcal{A}_S$ the subset of $\mathcal{A}$ restricted to a subset $S \subseteq [k]$ of indices from $1$ to $k$. In particular, $\mathcal{A}_{\{k\}} = -\Aa_k$ where $-\Aa_k$ contains the negative value $-w$ for each number of value $w \in \Aa_k \;(w \in [0, \rmax])$. Again we are given a threshold $\rmax$ on the value of numbers, and the operations are similar to fully dynamic bounded $3$-sum. In each operation we either add or delete a number with value in range $[0, \rmax]$ to/from one of $\Aa_i$'s, or answer a query of  the following format:

\begin{itemize}
	\item Does a tuple $(a_1, a_2, \ldots, a_k) \in \prod_{i \in [k]}{\mathcal{\Aa}_{i}}$ exist such that $a_1+\ldots+a_{k-1}-a_k = 0$?
\end{itemize}

Our algorithm for fully dynamic bounded $k$-sum is similar to our algorithm for fully dynamic at a high-level. However, instead of keeping track of $6$ sequences in the auxiliary dataset, we keep track of $2^k-2$ sequences. Moreover, each sequence in the auxiliary dataset is ranged over $[-k \rmax, k\rmax]$. Also, we refresh the auxiliary dataset every $\rmax^{1/(k-1)}$ operations.

In what follows, we study the dependence of the amortized processing time of fully dynamic bounded $k$-sum on $k$. We utilize the fully dynamic algorithm for bounded $3$-sum and extend it to compute $O(2^{k})$ sets $\Aa_S$ for each subset $S \subseteq [k]$ -- corresponding to each subset of $\mathcal{A}$ --  from scratch at each milestone. For a subset $S = \{i_1, i_2, \ldots, i_{|S|}\} \subseteq [k]$, $\Aa_S$ is defined as the set of the total values $\textsf{W}(\vec{a})$ of every tuple $\vec{a} = (a_{i_1}, a_{i_2}, \ldots, a_{i_{|S|}}) \in \prod_{z \in S}{\mathcal{A}_{\{z\}}}$, i.e.,
\[\Aa_S  = \Big\{\textsf{W}(\vec{a}) = \sum_{z \in S}{a_z} \mid \vec{a} = (a_{i_1},a_{i_2},\ldots,a_{i_{|S|}}) \in \prod_{z\in S}{\mathcal{A}_{\{z\}}}\land \textsf{W}(\vec{a}) \in [-\rmax, \rmax]\Big\}.\]

It is straightforward to compute $\Aa_S$ for every $S \subset [k]$ in $\widetilde{O}\big(2^k\cdot k^2\cdot\rmax\big)$ time using polynomial multiplication.
 Given $\Aa_S$ for each subset $S \subset [k]$ and the pile of new modifications $R$ since the last milestone,
 we update $\textsf{cnt}$ after each operation by counting the number of valid tuples $(a_1, a_2, \ldots, a_k) \in \prod_{z\in [k]}{\Aa_z}$ so that $a_1+\ldots+a_{k-1}- a_k = 0$, and $a_i = w_j$ if the current operation adds (or deletes) a number of value $w_j$ to (from) $\Aa_i$.

We iterate over each $O(2^{k-1})$ types of tuples after each modification. Each $a_z \;(\forall{\; z \in [k]  \setminus \{i\}})$  either existed before the last milestone and is not removed,
or is added to $\Aa_z$ since the last milestone.
We specify each type of tuples by a subset $S' \subseteq [k] \setminus\{i\}$. 
For the special case of $S' = [k] \setminus \{i\}$, in which we do not use the new numbers since the last milestone, we simply lookup $-w_j$, or $w_j$ if $i = k$, in a precomputed set $\Aa_{([k] \setminus \{i\})}$.
For any other type, specified by a subset $S' \subset [k] \setminus \{i\}$, we lookup the number of the tuples in $\Aa_{S'}$ for all 
$O\big(|R|^{k-2}\big)$ 
possible combinations of newly added numbers from at most $k-2$ sets.
Depending on whether the current operation is addition or deletion, we add or subtract the number of valid tuples to/from $\textsf{cnt}$.
If $\opers < \rmax^{1/(k-1)}$, we never create the auxiliary dataset and the running time per operation is bounded by $O\big(|R|^{k-2}\big) = O(\opers^{k-2})$.
Otherwise, the total running time of the algorithm per operation is equal to $\widetilde O_k(\rmax^{(k-2)/(k-1)})$.

\begin{theorem}	\label{thm:ksum}
	There is an algorithm for fully dynamic bounded $k$-sum whose amortized processing time is bounded by  $\widetilde O_k(\min{\{\opers^{k-2}, \rmax^{(k-2)/(k-1)}\}})$. 
\end{theorem}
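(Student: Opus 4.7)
The plan is to mimic the structure of the fully dynamic bounded $3$-sum algorithm of Theorem~\ref{thm:3sum}, replacing the six auxiliary convolutions by the $2^{k}-2$ sets $\Aa_{S}$ indexed by nontrivial $S\subsetneq[k]$, and replacing the $\sqrt{\rmax}$ milestone spacing by $\rmax^{1/(k-1)}$. I would maintain a counter $\textsf{cnt}$ of the number of tuples $(a_1,\ldots,a_k)\in\prod_{i\in[k]}\Aa_i$ satisfying $a_1+\cdots+a_{k-1}-a_k=0$, answer each query by checking whether $\textsf{cnt}>0$, and flush the pile $R$ of pending modifications into $\Aa_1,\ldots,\Aa_k$ every $\rmax^{1/(k-1)}$ operations, recomputing all $\Aa_{S}$ and $\textsf{cnt}$ from scratch at each such milestone.

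For the initialization cost, each $\Aa_{S}$ is a count-vector indexed by $[-k\rmax,k\rmax]$ whose entry at $i$ records the number of tuples in $\prod_{z\in S}\mathcal{A}_{\{z\}}$ summing to $i$; it can be assembled by at most $k-1$ convolutions of vectors of length $O(k\rmax)$. Summed over the $2^k$ subsets this yields a rebuild cost of $\widetilde O_k(\rmax)$, and since a rebuild is performed once every $\rmax^{1/(k-1)}$ operations, the amortized rebuild cost is $\widetilde O_k(\rmax^{(k-2)/(k-1)})$ per operation.

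For the per-operation update, suppose operation $j$ modifies coordinate $i$ by value $w_j$; I need to count $k$-tuples whose $i$-th coordinate equals $w_j$ and whose signed sum is zero. I would classify each candidate tuple by the subset $S'\subseteq[k]\setminus\{i\}$ of coordinates whose values are \emph{old} (present at the last milestone and not later deleted), with complement $T=[k]\setminus(\{i\}\cup S')$ of \emph{new} coordinates (modified in $R$). If $S'=[k]\setminus\{i\}$, a single $O(1)$ lookup in $\Aa_{[k]\setminus\{i\}}$ suffices; if $\emptyset\subsetneq S'\subsetneq[k]\setminus\{i\}$, I iterate over the $|R|^{|T|}\leq|R|^{k-2}$ assignments of the $|T|$ new coordinates and read off the count of old completions from $\Aa_{S'}$ at the single required index; if $S'=\emptyset$, I iterate over $|R|^{k-2}$ assignments of $k-2$ of the new coordinates and look up the forced value of the remaining one directly in $R$. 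Summing over the $2^{k-1}$ choices of $S'$ gives $\widetilde O_k(|R|^{k-2})=\widetilde O_k(\rmax^{(k-2)/(k-1)})$ work per operation, and $\textsf{cnt}$ is incremented or decremented by this total according to whether operation $j$ is an insertion or a deletion.

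The main obstacle will be the correctness bookkeeping, since an entry of $\Aa_{S'}$ records the multiplicity at the time of the last milestone rather than at the time of operation $j$, and $R$ mixes insertions with deletions. I would resolve this exactly as in $\textsf{CountTriples}$ by viewing each coordinate $z\neq i$ as being in one of three disjoint states -- \emph{present at milestone and still present}, \emph{inserted since milestone}, or \emph{deleted since milestone} -- and introducing $\pm$ correction terms, each itself a sum of at most $|R|^{k-2}$ single-entry lookups, to cancel the contribution of old completions that the data structure reports but which have since been removed. A short induction on $|T|$ then shows that every currently valid tuple is counted exactly once with the correct sign. Finally, when $\opers<\rmax^{1/(k-1)}$ no milestone ever fires, $R$ holds the entire history, and the same update procedure runs in $O(\opers^{k-2})$ per operation; the two bounds together match the claimed $\widetilde O_k(\min\{\opers^{k-2},\rmax^{(k-2)/(k-1)}\})$.
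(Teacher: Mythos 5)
Your proposal is correct and follows essentially the same route as the paper: milestones every $\rmax^{1/(k-1)}$ operations, precomputation of the $2^k-2$ count-vectors $\Aa_S$ over $[-k\rmax,k\rmax]$ by convolution, classification of candidate tuples by the subset of ``old'' coordinates with $O(|R|^{k-2})$ enumeration of the new ones, and the naive $O(\opers^{k-2})$ fallback when no milestone fires. Your explicit treatment of the deletion correction terms (generalizing $\textsf{CountTriples}$) is in fact spelled out in somewhat more detail than the paper's own sketch, but it is the same mechanism.
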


\subsection{A Reduction from Bounded $k$-sum to Subset Sum}


\newcommand{\instsub}{\mathcal{I}}
\newcommand{\instksum}{\mathcal{K}}

Any instance $\instksum$ of (fully) dynamic bounded $k$-sum with weight threshold $\rmax$ can be converted to an equivalent (fully) dynamic subset sum  instance $\instsub$ in a way that the answer of each query is unchanged and we have target threshold $\tmax = O(k^{k+2}\cdot\rmax)$. To create the subset sum instance $\instsub$, we process the operations of $\instksum$ one by one. We add an item with weight $f(k,i)\cdot\rmax + w$ to $\instsub$ upon the arrival of a new number in $\Aa_i$ with value $w$ if $i < k$ and $f(k,k)\cdot\rmax - w$ if $i=k$. In the rest of this section, 
we construct function $f(k,i)$ and show that $\sum_{i\in[k]}f(k,i) = O(k^{k+2})$. Similarly, we remove the respective item if we have a deletion operation from some $\Aa_i$. We further show that if we query the weight $f^{\mathcal{Q}}(k) = \sum_{i}{f(k,i)}\cdot\rmax$ in the subset sum instance $\instsub$ instead of each bounded $k$-sum query in $\instksum$, we end up with the same response. We carefully choose the value of $f(k,i)$ for every $i \in [k]$ so that any possible subset that adds up to $f^{\mathcal{Q}}(k)$ includes exactly one item from each $\Aa_i$. Thus, any bounded $k$-sum instance is reduced to an equivalent subset sum instance.

\begin{theorem}\label{sum-reduction}
	Given a (fully) dynamic bounded $k$-sum instance with weight threshold $\rmax$, there is a (fully) dynamic subset sum instance with the same number of items and target threshold $\tmax \leq O(k^{k+2}\cdot\rmax)$ that preserves the query results.
\end{theorem}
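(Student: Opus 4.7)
\textbf{The plan} is to define the shifts as $f(k,i) := A + B^i$, where $B := k+1$ will act as a base that \emph{isolates} contributions from different sets and $A := \Theta\!\bigl(k(k+1)^k\bigr)$ will be a common large offset that pins down the cardinality of the chosen subset. Each insertion (resp.\ deletion) of a value $w$ into $\Aa_i$ is mirrored in $\instsub$ by inserting (resp.\ deleting) an item of weight $f(k,i)\rmax + w$ when $i<k$ and $f(k,k)\rmax - w$ when $i=k$, and each bounded $k$-sum query becomes a single subset sum query at target $T := \sum_{i=1}^{k} f(k,i)\rmax$. The forward direction is essentially by inspection: given a bounded $k$-sum witness $(a_1,\ldots,a_k)$ with $\sum_{i<k} a_i = a_k$, the corresponding $k$ items in $\instsub$ telescope to $T$.

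\textbf{The main obstacle} will be the reverse direction, i.e., showing that any subset of $\instsub$ summing to $T$ contains exactly one item from each $\Aa_i$. Let $n_i$ denote the number of chosen items from $\Aa_i$, let $W_i$ be the sum of their underlying $w$-values, and let $n := \sum_i n_i$. Expanding and rearranging the equality of the subset's total weight with $T$ will yield
\[
(n-k)\, A\, \rmax \;+\; \sum_{i=1}^{k}(n_i-1)\, B^i\, \rmax \;+\; \Bigl(\sum_{i<k} W_i - W_k\Bigr) \;=\; 0.
\]
Since every item weighs at least $A\rmax$, the bound $n \leq T/(A\rmax) \leq k+1$ is immediate, so $|n-k|\leq 1$. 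For $n \neq k$ the term $(n-k)A\rmax$ would have to absorb a quantity of magnitude at most $O(k B^k \rmax) = O(k^{k+1}\rmax)$; choosing $A$ to strictly exceed this bound rules out $n\neq k$, thereby forcing $n = k$. With $n=k$ the displayed equation reduces to the constraint $\sum_i (n_i-1) B^i \in [-\sum_{i<k}n_i,\; n_k] \subseteq [-k,\,k]$.

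\textbf{The hardest step} will be the following isolation claim: for every nonzero integer vector $(d_i)_{i\in[k]}$ with $d_i \geq -1$ and $\sum_i d_i = 0$ (which is the case $d_i := n_i-1$ once $n=k$), one has $|\sum_i d_i B^i| > k$. The plan is to let $j$ be the largest index with $d_j\neq 0$ and use the coarse bound $|d_i|\leq k-1$ (a consequence of $n_i\leq k$) to sum the tail geometrically. The choice $B=k+1$ makes $(k-1)/(B-1) = (k-1)/k < 1$, so the leading $|d_j| B^j$ term dominates and a short computation yields
\[
\Bigl|\textstyle\sum_i d_i B^i\Bigr| \;\geq\; \frac{B^j}{k} \;+\; \frac{k^2-1}{k} \;>\; k,
\]
for every $j\geq 1$, while a sign inspection (positive when $d_j\geq 1$, negative when $d_j=-1$) confirms that the sum lies outside $[-\sum_{i<k}n_i,\; n_k]$ on the correct side. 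This forces $(n_1,\ldots,n_k)=(1,\ldots,1)$, collapsing the equation to $\sum_{i<k} W_i = W_k$, which is exactly a bounded $k$-sum witness. The final bookkeeping gives $\sum_i f(k,i) = kA + \sum_i B^i = O\!\bigl(k^2(k+1)^k\bigr) = O(k^{k+2})$ using $(1+1/k)^k \leq e$, hence $\tmax = T \leq O(k^{k+2})\cdot\rmax$, as claimed.
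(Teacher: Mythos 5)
Your proposal is correct and takes essentially the same approach as the paper: encode a value $w$ in $\Aa_i$ as an item of weight $f(k,i)\rmax+w$ (with the sign of $w$ flipped for $\Aa_k$), where $f(k,i)$ is a large common offset plus the $i$-th power of a base, query the target $\sum_i f(k,i)\rmax$, use the offset to force any solution subset to have exactly $k$ items, and use the distinct powers to force exactly one item per set. The only differences are in the constants (your base $k+1$ and offset $\Theta\big(k(k+1)^k\big)$ versus the paper's $f(k,i)=c+k^i$ with $c=k^{k+1}$) and in the fact that your explicit isolation claim for integer vectors $(d_i)$ with $d_i\geq -1$ and $\sum_i d_i=0$ spells out carefully the step the paper dispatches with ``using the same argument,'' which is a welcome tightening but not a genuinely different route.
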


\begin{proof}
	It is true that if the bounded $k$-sum query result is positive, we correctly identify it since 
	\[\sum_{i\in[k-1]}{\big(f(k,i)\cdot\rmax+a_i}\big)+ f(k,k)\cdot\rmax-a_k= (a_1+\ldots+a_{k-1}-a_k) + f^{\mathcal{Q}}(k).\]
	By querying the target $f^{\mathcal{Q}}(k) $ in the subset sum instance $\instsub$, we find out whether there is a tuple $(a_1,a_2,\ldots,a_k) \in \prod_{i\in[k]}{\Aa_i}$ so that $a_1+\ldots+a_{k-1}-a_k = 0$. We also need to show that no subset of items in $\instsub$ adds up to $f^{\mathcal{Q}}(k) $ unless it contains exactly one item from every $\Aa_i$. To achieve this, we set $f(k,i) = c + k^i$ for a large enough value $c \geq k^{k+1}$. We observe that any subset containing $k+1$ items has a weight of at least 
	\[c(k+1)\rmax = (kc + c)\rmax \geq (kc+ k^{k+1})\rmax > \big(\sum_{i\in[k]}{(c + k^i)}\big)\rmax= \sum_{i\in[k]}{f(k,i)}\rmax =  f^{\mathcal{Q}}(k).\]
	
	Additionally, any subset containing at most $k-1$ items has a weight of at most 
	\[(k-1)(c+k^k+1)\rmax \leq (k-1)(c+c/k+1)\rmax = (kc-c/k+k-1)\rmax < kc\rmax < f^{\mathcal{Q}}(k).\]
	
	where we used the inequality $k^k \leq c/k$ according to the definition of $c$. Thus, any subset in $\instsub$ adding up to $f^{\mathcal{Q}}(k)$ has exactly $k$ items. Now assume that this subset contains two items from set $\Aa_k$. The total weight of this subset must be at least $(ck+2k^k)\rmax$. However,
	
	\[f^{\mathcal{Q}}(k) = \sum_{i}f(k,i)\cdot \rmax = (ck + k^k + \sum_{i < k}{k^i})\rmax < (ck+2k^k)\rmax. \]
	
	Hence, if there is a subset of $k$ items in $\instsub$ that contains at least two items from $\Aa_k$, its total weight is strictly larger than $f^{\mathcal{Q}}(k)$. Using the same argument we can show any subset of $\instsub$ that adds up to $f^{\mathcal{Q}}(k)$ contains exactly one item from each $\Aa_i$.
	This completes the reduction from any (fully) dynamic bounded $k$-sum instance with weight threshold $\rmax$ to a (fully) dynamic subset sum instance with the following target threshold if we set $c = k^{k+1}$.
	\[\tmax = f^{\mathcal{Q}}(k) = \sum_i{f(k,i)\cdot\rmax} = (kc+\sum_{i}{k^i})\cdot\rmax < (k^{k+2}+k^{k+1})\rmax = O(k^{k+2}\cdot \rmax).\]
\end{proof}

\end{document}